\documentclass[conference]{IEEEtran}
\ifCLASSINFOpdf
\else
\fi

\usepackage{mathrsfs}
\usepackage{amsmath}
\usepackage{algorithm}
\usepackage{url}
\usepackage{algorithmic}
\usepackage{graphicx}
\usepackage{subfigure}

%


\hyphenation{crowd-sour-cing}

\begin{document}
%
\title{CrowdFusion: A Crowdsourced Approach on \\Data Fusion Refinement}

\author{\IEEEauthorblockN{Yunfan Chen, Lei Chen, Chen Jason Zhang}
\IEEEauthorblockA{Department of Computer Science and Engineering\\
Hong Kong University of Science and Technology
Hong Kong SAR, China\\
\{ychenbx, leichen, czhangad\}@cse.ust.hk}
}


%


\maketitle

\begin{abstract}
Data fusion has played an important role in data mining because high
quality data is required in a lot of applications. As on-line data
may be out-of-date and errors in the data may propagate with copying
and referring between sources, it is hard to achieve satisfying
results with merely applying existing data fusion methods to fuse
Web data. In this paper, we make use of the crowd to achieve high
quality data fusion. We design a framework selecting a set of tasks
to ask crowds in order to improve the confidence of data. Since data
are correlated and crowds may provide incorrect answers, how to
select a proper set of tasks to ask the crowd is a very challenging
problem. In this paper, we design an approximation solution to
address these challenges since we prove that the problem is at
NP-hard. To further improve the efficiency, we design a pruning
strategy and a preprocessing method, which effectively improve the
performance of the approximation solution. Furthermore, we find that
under certain scenarios, we are not interested in all the facts, but
only a specific set of facts. Thus, for these specific scenarios, we
also develop another approximation solution which is much faster
than the general approximation solution. We verify the solutions
with extensive experiments on a real crowdsourcing platform.
\end{abstract}


%
\IEEEpeerreviewmaketitle

\section{Introduction}

Obtaining true information from Web data is pivotal to the success
of applications in making data-driven decisions. However, the Web
data involve high inconsistency and false/out-of-date information,
as stated in \cite{li2012truth}. As a matter of fact, even for a
simple question such as ``Height of Mount Everest'', we get
conflicting answers from search engines, such as $29,002$ feet,
$29,035$ feet and $29,029$ feet. Essentially, our  task is to
identify the true values from the false ones. Such task is typically
referred to as \emph{data fusion} \cite{dong2014data}.

Data fusion is inherently challenging - the amount of information
available on the Web has been growing rapidly and the Web data are
being altered from time to time. Thus data fusion has drawn much
attention from researchers. Existing works followed a general
principle - they perform a weighted aggregation of multiple data
sources based on the estimated source trustworthiness
\cite{li2015survey}, assuming that the data generated from the same
source is equally reliable. Many works have designed methods of
estimating source quality and integrating results
\cite{zhao2012bayesian,galland2010corroborating,rekatsinas2015finding,dong2012less,li2014resolving}
based on such principle. In addition, there are works trying to
model the relationships between sources
\cite{dong2009truth,pochampally2014fusing,sarma2011data}. Those
methods also follow the same principle.

Though there exist many attempts to address the data fusion problem
as discussed above, these machine-based approaches cannot achieve
high accuracy. Therefore, we  would like to leverage the power of
the crowd to refine data fusion result obtained from pure
machine-based methods. Specifically, we build a crowd-machine hybrid
system due to the following reasons. First, machine-only methods can
hardly overcome the different reliability of Web data sources over
different knowledge domains, but crowd can choose questions about
their familiar domains to answer actively. Real-world data sources
may only offer trustworthy data in some specific domains but are
unreliable in others. For example, in our data set used in the
empirical study, there is an information source called
\emph{eCampus.com} \footnote{\url{http://www.ecampus.com/}} offering
author information of 24 books. For textbook, $55\%$ data offered by
\emph{eCampus.com} are consistent with the author list on the book
covers, however the rate of consistency drops to $0$ for
non-textbooks.  By leveraging crowd power, we can refine the result
obtained from other method and generate a better result.


Second, crowd-only system is too expensive to afford. With help of
computer, we can leverage crowd power efficiently to refine the data
fusion result. Real world facts are correlated with each other. For
instance, the following two facts: \textit{A = ``Barack Obama got
married when he was 31 years old''} and \textit{B=``Barack Obama got
married in 1992''} are related - they are both connected by the fact
that \textit{C=``Barack Obama was born in 1961''}. Explicitly, given
that $A$ and $C$ are both correct, one can infer that $B$ is
correct. Probabilistically, $A$ and $B$ are correlated random
variables, and we have $Pr(A|C) = Pr(B|C)$. We depict the
correlation among the facts by using joint distribution of the
facts. With the correlations, we do not need to ask every fact to
crowd. By asking a subset of questions we can benefit the whole fact
set of interest.

In fact, it is nontrivial to infer correlations between facts. If we
adopt a heuristic method to deal with inference relationships such
as the one proposed in \cite{yin2011semi}, we need to design a
specific function for each single domain. In this work we design a
more general method to describe the correlations among facts with a
joint distribution which will be introduced in
Section~\ref{sect:datamodel}. Our description is general enough to
be compatible with any domain-specific heuristic method.

Third, it is more expensive and inefficient to solve the problems above with other human involved methods such as supervised or semi-supervised\cite{yin2011semi} methods compared to crowdsourcing. This problem is especially severe when the data is sparse and ``Ad-hoc" fusion results are desired. Explicitly, as information is generated fast on the Web and easily gets out-of-date, we need constantly expert efforts to ensure the quality of data fusion system output. For instance, if we search whether the number of planets in the solar system is 8 or 9 on the Web, the number of results of ``9 planets'' is twice of those of ``8 planets'', even if this information has been changed for over $17$ years since Pluto was demoted to Dwarf Planet in 1999.
Asking experts to monitor the updates of information and semi-supervise the system all the time is too expensive and hereby impractical. Compared with semi-supervised methods, crowd is much cheaper and much more ``Ad-hoc'' supported.


In this paper, we develop a novel crowdsourcing-based machine-crowd
hybrid system, namely \textit{CrowdFusion}, to improve result of
exiting \emph{data fusion} methods.
Figure~\ref{figure:system_overview} shows structure of our system.
Compatible with the traditional data fusion models, the input of our
system is a set of fact observations and a prior probability
distribution over all possible results, i.e., probability
distribution calculated by existing data fusion models. There are
two outstanding features of CrowdFusion - (1) it does not hold any
strong assumption as the existing data fusion methods do; and (2)
CrowdFusion is able to make use of the inherent correlations among
facts to work efficiently. Since crowd workers are not always able
to give correct answers especially when the tasks are complex
\cite{park2013query,parameswaran2011human,kittur2011crowdforge}, we
take judgment of one fact as our task to get higher accuracy. The
challenge is how we can ask the crowd tasks efficiently to get more
accurate results with a restricted budget -  we formulate this
challenge as an optimization problem. Furthermore, we show the
computational hardness of the problem, and design an approximate
solution. Besides, we extend the problem to a new scenario with a
query given by users as an extension and propose a task selection
method to ask crowds effectively under such condition.

To summarize, we have made the following contributions.
\begin{itemize}
    \item In Section \ref{sect:def},  we formally define crowdsourcing-based data fusion problem, which is the first attempt to address Web data fusion with the help of the crowd.
    \item We design a system called CrowdFusion to handle data fusion problems, prove that the finding the optimal set of crowdsourcing tasks is NP-hard, and consequently propose several approximation solutions in Section \ref{sect:crowdTS}.
    \item We present a natural extension of crowdsourced data fusion together with its solutions in Section \ref{section:extension}.
    \item Finally, we conduct an extensive experimental study on a real crwodsourcing platform to demonstrate the effectiveneess of CrowdFusion. The results are discussed in Section \ref{sect:exp}.
\end{itemize}
   Besides, Section \ref{related} reviews the related works, and Section \ref{sect:conc} concludes this work as well as suggests possible  directions of future enhancement.

\section{Background and Problem Definitions}\label{sect:def}

In this section, we first formally define the data model, then we
present the crowdsourcing model and finally we give the formal
definition of the problem that we are going to address in this work.

\subsection{Data Model}\label{sect:datamodel}
Let $\mathcal{F}$ be a set of facts and $|\mathcal{F}|=n$. Each fact represents a particular aspect of a real-world entity.
In our problem, the facts are not restricted to one particular \emph{domain}, i.e. $f_i$ and $f_j$ can refer to two totally different real-world entities.

A fact $f_i$ is represented as a triple of \{subject, predicate, object\} and its value is either true or false;
for example, \{Mountain Everest, Height, $29,002 ft$\} cannot be uncertain as there is an
exact height of Mountain Everest. Since the facts are all about real world entities, no fact can be uncertain.
It is quite possible that multiple facts with the same subject and the same predicate are true;
e.g. fact \{Barack Obama, Daughter, Malia Obama\} and fact \{Barack Obama, Daughter, Sasha Obama\} are both true facts.

Facts are related with each other based on similarity, conflict, correlation or any other relationships.
Given $n$ facts, we consider each fact as a Bernoulli random variable, and the dependencies among the facts can be naturally depicted as their joint distribution, which has $2^n$ possible outputs.
We represent probabilities of all possible outputs by $P(o_i), i=1,2,...,2^n$. Output Set $\mathcal{O}$ consists of all possible outputs.

One output $o_i$ is a set of true-or-false judgments, i.e. $o_i = \{(f_i, state)|i=1,...,n, state\in \{true, false\}\}$.
Let $O_k$ be a set of outputs having $f_k$ true, that is $O_k = \{o_i|(f_k, true)\in o_i\}$.

\begin{table}
\centering
\caption{RUNNING EXAMPLE - Facts with Uncertainty}
\centering
\begin{tabular}{|c|c|c|c|c|} \hline
Fid & Entity & Attribute & Value & $P(f_i)$\\ \hline
$f_1$ & Hong Kong & Continent & Asia & $0.5$\\ \hline
$f_2$ & Hong Kong & Population & $\geq 500,000$ & $0.63$\\ \hline
$f_3$ & Hong Kong & Major Ethnic Group & Chinese & $0.58$\\ \hline
$f_4$ & Hong Kong & Continent & Europe & $0.49$\\
\hline\end{tabular}

\label{table:Facts}
\end{table}

\begin{table}
\centering \caption{RUNNING EXAMPLE - Output Joint Distribution}
\centering
\begin{tabular}{|c|c|c|c|c|c|} \hline
Oid & $f_1$ & $f_2$ & $f_3$ & $f_4$ & $P(o_i)$\\ \hline
$o_1$ & F & F & F & F & $0.03$\\ \hline
$o_2$ & F & F & F & T & $0.06$\\ \hline
$o_3$ & F & F & T & F & $0.07$\\ \hline
$o_4$ & F & F & T & T & $0.04$\\ \hline
$o_5$ & F & T & F & F & $0.09$\\ \hline
$o_6$ & F & T & F & T & $0.01$\\ \hline
$o_7$ & F & T & T & F & $0.11$\\ \hline
$o_8$ & F & T & T & T & $0.09$\\ \hline
$o_9$ & T & F & F & F & $0.04$\\ \hline
$o_{10}$ & T & F & F & T & $0.04$ \\ \hline
$o_{11}$ & T & F & T & F & $0.04$\\ \hline
$o_{12}$ & T & F & T & T & $0.05$\\ \hline
$o_{13}$ & T & T & F & F & $0.06$\\ \hline
$o_{14}$ & T & T & F & T & $0.09$\\ \hline
$o_{15}$ & T & T & T & F & $0.07$\\ \hline
$o_{16}$ & T & T & T & T & $0.11$\\
\hline\end{tabular}
\label{table:Distribute}
\end{table}

Table \ref{table:Facts} illustrates the facts with marginal correctness probabilities
and Table \ref{table:Distribute} shows all the possible outputs with the joint probabilities, i.e., $P(o_i)$.
$P(f_k)$ is the marginal probability of how likely $f_k$ is true. That is,
$P(f_k) = \sum_{o_i\in O_k} P(o_i)$.

\newtheorem{definition}{Definition}
\begin{definition}
Given a fact set $\mathcal{F}$, the estimation of quality of $\mathcal{F}$, denoted by $Q(\mathcal{F})$ as an utility function, is the negative value of Shannon Entropy, that is
\begin{displaymath}
Q(\mathcal{F}) = -H(\mathcal{F}) = \sum_{i=1}^{2^n} P(o_i)\log{P(o_i)},
\end{displaymath}
where $\sum_{i=1}^{2^n} P(o_i) = 1$.
\end{definition}

We are using the same mathematical metric as proposed in
\cite{cheng2008cleaning,mo2013cleaning,zhang2015cleaning}, which is
originally called \emph{PWS-quality}. Shannon entropy quantifies the
randomness of random variables, and the lower randomness the output
joint distribution has, the more confident correctness is. Thus, we
can use this utility function to estimate answers' qualities. By
improving the utility of outputs, the confidence of any query
answers would be improved as well. In section \ref{sect:exp}, we
also demonstrate by experiments that the utility function is a good
estimation for answers' quality.

\subsection{Crowdsourcing Model}
As stated before, we ask the crowd whether each fact is true or false independently to keep relatively high crowd reliability. Even if some works assume that crowdsourcing workers do not make any mistake\cite{parameswaran2011human,wang2012crowder,wang2013leveraging,whang2013question}, we consider a more general model which set the probability of correctness of a worker to no less than $0.5$. We will take crowd result as a sample of Bernoulli distribution.
This is a
classical crowdsourcing error model being widely used by many other works\cite{davidson2013using,guo2012so,liu2012cdas,parameswaran2012crowdscreen,sarma2012finding}.
The accuracy can be estimated by a small set of sample tasks with groundtruth.

\begin{definition}
Given a crowd, the probability that answer given by the crowd is correct is $P_c\in[0.5,1]$. We assume all the tasks
completed by crowds are independent from each other, i.e. whether we get correct answer for task $t_i$ does not
affect whether we can get correct answer for task $t_j$ as long as $i \neq j$. We define entropy of crowd $H(Crowd)$ as
\begin{equation}\label{equ:crowdentropy}
H(Crowd) =- P_c\log{(P_c)} - (1-P_c)\log{(1-P_c)}.
\end{equation}

\end{definition}

\begin{definition}
Given a fact set $\mathcal{F}$ and a task set $\mathcal{T}$,
the crowd answer set $\{Ans_i^{\mathcal{T}}|i=1,...,2^{|\mathcal{T}|}\}$ has $2^{|\mathcal{T}|}$ possibilities, each possible answer set contains true or false judgments of $k$ facts in $\mathcal{T}$. Crowd workers answer tasks with
accuracy $P_c$. If $f_i\in\mathcal{T}$, we say that fact $f_i$ is selected. For each possible answer set $Ans_i^{\mathcal{T}}$, the probability of getting that answer set is
\begin{equation}\label{equ:Ans}
P(Ans_i^{\mathcal{T}}) = \sum_{j=1}^{2^n} P(o_j)P_c^{\#\textrm{Same}}(1-P_c)^{\#\textrm{Diff}},
\end{equation}
where $\#\textrm{Same}$ is the number of the same judgment of the selected facts between output set $o_i$ and answer set $Ans_i^{\mathcal{T}}$.  Similarly, $\#\textrm{Diff}$ is the number of different judgment of the selected facts between output set $o_i$ and answer set $Ans_i^{\mathcal{T}}$. Clearly we have
\begin{displaymath}
\sum_{i=1}^{2^{|\mathcal{T}|}}P(Ans_i^{\mathcal{T}}) = 1.
\end{displaymath}
\end{definition}

Given the running example in Table \ref{table:Distribute},
to get an answer set $a_1$ with all negative judgment and $P_c=0.8$, we need to sum up probabilities that the answer set is obtained from crowd under each different potential output.
For $o_1$, $\#\textrm{Same}=4$ and $\#\textrm{Diff}=0$.
Note that correctness of answers are independent from the task and other answers.
Thus we can get the probability that $a_1$ is obtained and that the ground truth equals to $o_1$ is $0.03\times0.8^4\times 0.2^0 = 0.012$.
By summing all probabilities of each potential outputs together we get $P(a_1)=0.049$.
Similarly we can get probability of all possible answers shown in Table \ref{table:ans_dis}.
\subsection{Formalization}

After defining the data model and the crowdsourcing model, we can
now formally define our system goal of CrowdFusion.

\begin{definition}
Given a fact set $\mathcal{F}$, possible outputs $\mathcal{O}$ with
probability joint distribution and a crowd with accuracy $P_c$, our
goal is to maximize the utility $Q(\mathcal{T})$ by selecting a
size-$k$ set of facts to ask the crowd.
\end{definition}

We take crowdsourcing as a powerful tool to refine data fusion result. As a crowd may be noisy and unreliable, we discuss how to use such an imperfect crowd to improve the utility in the rest part of this paper. 

\section{CrowdFusion}\label{sect:crowdTS}

This section introduces a novel crowdsourced data fusion refinement system, namely \textit{CrowdFusion}. The system architecture is demonstrated in Figure~\ref{figure:system_overview}. Explicitly,
CrowdFusion can be initialized by any existing probability-based data fusion method which is demonstrated in Section \ref{related}, or simply set to uniform distribution. After that, the system executes the data improvement process for multiple rounds. In each round, we select a set of tasks, publish them to a crowd, and then use the crowdsourced answers to improve the data quality. The whole procedure terminates when the budget runs out, and generates the fusion results as output.

\begin{figure}[h]\centering
\includegraphics[width=0.40\textwidth]{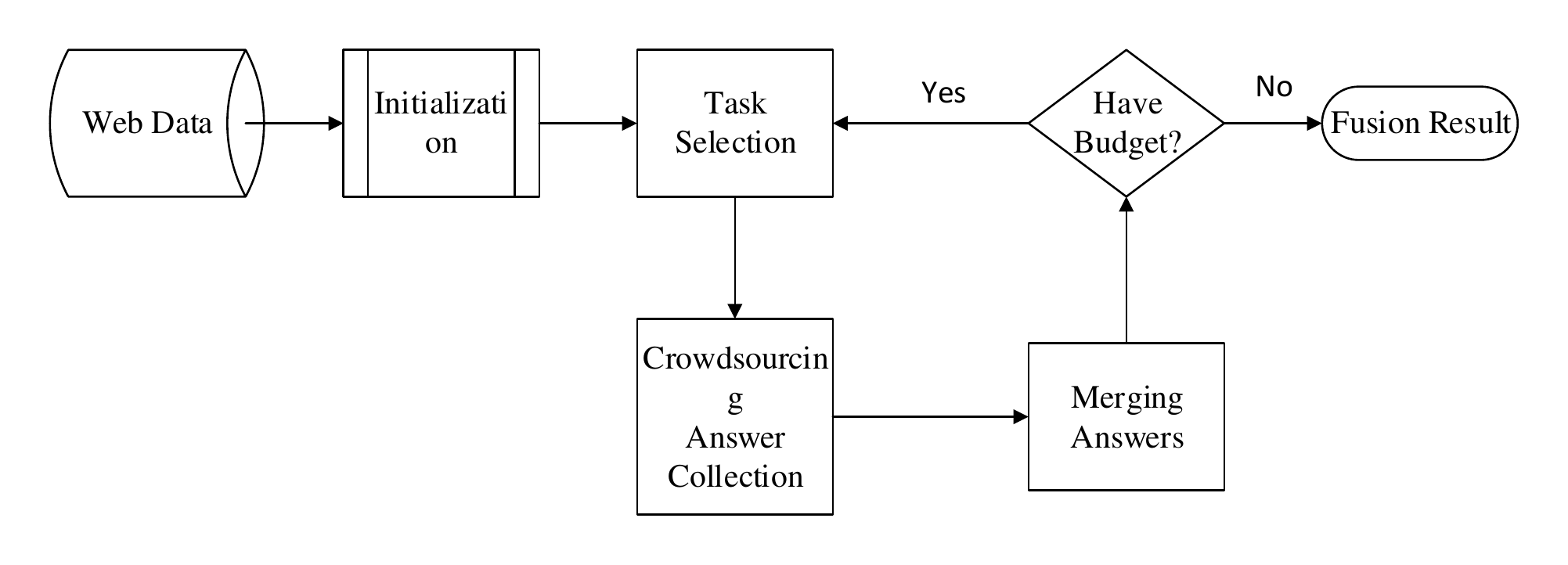}
\centering \caption{System Flowchart}
\label{figure:system_overview}
\end{figure}

In this section, we will discuss how to use the crowd answers to
update the utility value first. Then we introduce the core
optimization problem of this system - task selection. Finally we
discuss solutions to task selection and give theoretical analysis to
those solutions. We call a selection-collection-updating cycle as a
\emph{round} in CrowdFusion.

\subsection{Merging Crowd Answers with Output Set}\label{sect:Merging}
We first look into how a set of obtained answers can affect the output. As facts and crowdsourced answers are both uncertain, merging the crowd answers with the output set can be treated as a posterior output probabilities conditioning on the answers,
and Bayesian theorem can be properly applied on it.

First, $P(o_i)$ is different from $P(Ans_i^{\mathcal{T}})$, even if we ask all the facts to crowds and $o_i$
has the same judgment of facts with $Ans_i^{\mathcal{T}}$.
$P(o_i)$ is the probability that the ground truths of all the facts are $o_i$ but $Ans_i^{\mathcal{T}}$ is the probability that we receive the answers as $Ans_i^{\mathcal{T}}$ which may contains uncertainty involved by the crowd workers.
We can get $Ans_i^{\mathcal{T}}$ by substituting $P_c$ and $\{P(o_i)\}$ in Equation \ref{equ:Ans}.

Then we concentrate on how the crowd answers will affect the confidence of the outputs, or say how to update $P(\mathcal{O})$ after we receive answers $Ans_j^{\mathcal{T}}$ from crowds.
Considering a specific output $o_i$, the probability of $o_i$ to be true is updated to $P(o_i|Ans_j^{\mathcal{T}})$.
Then we can modify the probability of $o_i$ to
\begin{equation}\label{equ:update}
P(o_i|Ans_j^{\mathcal{T}}) = P(o_i)P(Ans_j^{\mathcal{T}}|o_i)/P(Ans_j^{\mathcal{T}}).
\end{equation}

In Equation \ref{equ:update}, $P(Ans_j^{\mathcal{T}})$ is stated by Equation \ref{equ:Ans}. And $P(Ans_j^{\mathcal{T}}|o_i)$ can be calculated by counting the number of the same judgments between $Ans_j^{\mathcal{T}}$ and $o_i$, and we have $\#\textrm{Diff}$ vice versa. Facts unselected are not counted. That is
\begin{displaymath}
P(Ans_i^{\mathcal{T}}|o_i) = P_c^{\#\textrm{Same}}(1-P_c)^{\#\textrm{Diff}}.
\end{displaymath}

Referring to the running example shown in Table \ref{table:Distribute}, we take $\{f_1\}$ as our task set, i.e. we ask ``\emph{Is Hong Kong an Asia city?}'' to the crowd.
And we get an answer of ``yes'' from the crowd with $P_c = 0.8$.
We specify this answer result as event $e$. Then we can update $P(o_1)$:
\begin{eqnarray}
P(e) &=& \sum_{j=1}^{16} P(o_j)P_c^{\#\textrm{Same}}(1-P_c)^{\#\textrm{Diff}} = 0.5 ,\nonumber\\
P(o_1|e) &=& P(o_1)P(e|o_1)/P(e) \nonumber\\
 & = & 0.03 \times (1-0.8) / 0.5 = 0.012 .\nonumber
\end{eqnarray}

The probability that $o_1$ is groundtruth is reduced by the crowd answers as it conflicts with the crowd answer. Meanwhile,
the outputs which state that $f_1$ is true will get higher confidence because of the crowd answers. For example:
\begin{displaymath}
P(o_9|e) = P(o_9)P(e|o_9)/P(e) = 0.04 \times 0.8 / 0.5 = 0.064
\end{displaymath}

\subsection{CrowdFusion Task Selection}\label{sect:Simplify}
Before we discuss how to select the tasks, we need to define the utility improvement after we get answers from the crowd.

\begin{definition}
Given a task set $\mathcal{T}$, the utility of data after asking is:
\begin{displaymath}
Q(\mathcal{F}|\mathcal{T}) = H(\mathcal{T}) - H(\mathcal{F},\mathcal{T}),
\end{displaymath}
where $\mathcal{T}$ stands for task set $\mathcal{T}$ is selected to be asked.
\end{definition}

To maximize $Q(\mathcal{F}|\mathcal{T})$ is as same as to maximize $\Delta Q(\mathcal{F}) = Q(\mathcal{F}|\mathcal{T})-Q(\mathcal{F}) = H(\mathcal{F}|\mathcal{T}) - H(\mathcal{F})$. By properties of entropy, we have:
\begin{displaymath}
\Delta Q(\mathcal{F}) = H(\mathcal{F}) - H(\mathcal{F}|\mathcal{T}) = H(\mathcal{T}) - H(\mathcal{T}|\mathcal{F})
\end{displaymath}

Where $H(\mathcal{T}|\mathcal{F}) = kH(Crowd)$ which is a constant given $k$. Thus, the task is simplified to select
$k$ facts tasks set $\mathcal{T}$ to maximize $H(\mathcal{T})$. And $H(\mathcal{T}) = H(\{Ans_i^{\mathcal{T}}\})$. Formally, we have the following optimization goal:
\begin{equation}\label{equ:goal}
\mathcal{T}_{best} := arg \max_{\mathcal{T}}H(\mathcal{T})
\end{equation}

To further demonstrate our optimization goal, consider a special case that we take $P_c = 1$, $H(\mathcal{T}) = H(\{f_i|f_i \in T\})$. We can just take a subset of size $k$ with highest entropy.
Intuitively, as we know nothing about the crowd, we may choose the best $\mathcal{T}$ with highest $H(\{f_i|f_i \in T\})$ instead of choosing the best $\mathcal{T}$ with highest $H(\mathcal{T})$.
However, this is not true in the general case, i.e. a subset with the highest fact entropy does not indicate
the subset has the highest task entropy. Given the running example in Table \ref{table:Distribute},
if we set $k=2$ and $P_c = 0.8$, that is we want to select two facts to ask whether they are true or false,
the best choices are $f_1$ and $f_4$ based on the calculation results shown in Table \ref{table:Differ}.
 If we trust the crowd and use $H(\{f_i|f_i \in T\})$
instead of $H(\mathcal{T})$, the result, however, will be different.
To be more specific, if $P_c = 1$, the best choices become $f_1$ and $f_2$.

\begin{table}
\centering
\caption{RUNNING EXAMPLE - Difference between Entropy of tasks and of facts}
\centering
\begin{tabular}{|c|c|c|} \hline
$\mathcal{T}$ & $H(\{f_i|f_i \in T\})$ & $H(\mathcal{T})$\\ \hline
$\{f_1,f_2\}$ & \boldmath$1.981$ \unboldmath & $1.993$\\ \hline
$\{f_1,f_3\}$ & $1.949$                & $1.982$\\ \hline
$\{f_1,f_4\}$ & $1.976$                & \boldmath$1.997$\unboldmath\\ \hline
$\{f_2,f_3\}$ & $1.929$                & $1.975$\\ \hline
$\{f_2,f_4\}$ & $1.977$                & $1.993$\\ \hline
$\{f_3,f_4\}$ & $1.948$                & $1.982$\\
\hline
\end{tabular}
\label{table:Differ}
\end{table}

\subsection{Hardness and Challenges}
\newtheorem{theorem}{Theorem}
\begin{theorem}
Given $n$ possible tasks, selecting $k$ of them to reach the highest value of utility function is an NP-hard problem.
\end{theorem}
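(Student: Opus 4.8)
The plan is to establish NP-hardness by a polynomial-time reduction from a canonical NP-hard subset-selection problem. A natural first move is to restrict attention to the special case $P_c = 1$, which is legitimate because the crowd accuracy is only required to satisfy $P_c \in [0.5,1]$, so hardness of this special case immediately implies hardness of the general problem. As the excerpt already observes, when $P_c = 1$ the objective collapses to $H(\mathcal{T}) = H(\{f_i \mid f_i \in \mathcal{T}\})$, so the task becomes: given the joint distribution over the $n$ Bernoulli facts, choose a size-$k$ subset $S$ of facts maximizing the joint entropy $H(X_S)$. I would first record that $S \mapsto H(X_S)$ is monotone and submodular, which signals that the difficulty is of the same flavor as maximum-coverage-type problems, and that any hardness must come from the structure of the prescribed joint distribution rather than from the objective being ill-behaved.

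Next I would design a gadget that encodes a hard combinatorial instance into a joint distribution over $n$ facts. I would reduce from \textsc{Maximum Independent Set} (equivalently \textsc{Clique}): given a graph $G = (V,E)$ and an integer $k$, create one fact $f_v$ per vertex $v$, each with a uniform marginal so that $H(X_v) = 1$, and engineer the correlations so that two facts $f_u, f_v$ carry mutual information exactly when $\{u,v\} \in E$, while any set of facts forming an independent set in $G$ is jointly independent and uniform. Under such a distribution the entropy of a chosen subset $S$ is maximal (equal to $|S|$ bits) precisely when $S$ induces no edge, i.e.\ when $S$ is an independent set, and strictly smaller otherwise because every internal edge forces a loss of entropy. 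The decision question ``can the utility reach value $k$ with a size-$k$ selection?'' then becomes exactly ``does $G$ contain an independent set of size $k$?'', transferring the NP-hardness.

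To finish I would prove the two directions of the correspondence: an independent set of size $k$ yields a selection achieving entropy $k$, and conversely any selection achieving the maximum must be edge-free and hence an independent set; both follow once the gadget's entropy profile is established. The main obstacle — and the step I expect to absorb most of the work — is realizing the gadget as a genuine joint distribution with the claimed entropy behavior. The constraint that each fact is a single bit caps $H(X_S)$ at $|S|$, so the reduction must target a problem whose optimum is naturally bounded by $k$ (which is why independent set, rather than raw coverage, is convenient here). More delicate is controlling higher-order interactions: pairwise independence does not imply mutual independence, and purely linear, parity-style correlations collapse the entropy into a matroid-rank function that is polynomially optimizable, so the gadget must employ genuinely nonlinear correlations while remaining a valid probability distribution and keeping the entropy of every independent subset exactly additive. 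Verifying that these higher-order terms vanish, or are small enough never to flip the $\arg\max$, is the crux; once the gadget's entropy is pinned down, the equivalence and the polynomial size of the reduction are routine.
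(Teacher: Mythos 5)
Your route is genuinely different from the paper's. The paper also specializes to $P_c=1$, but in addition fixes $k=1$ and reduces from \textsc{Partition}: given numbers $c_1,\dots,c_s$, it creates $s$ outputs with probabilities $c_i/\mathrm{Sum}$ over $n=2^s$ facts, arranged so that the truth values of fact $f_I$ across the $s$ outputs spell out the binary expansion of $I$; then $H(f_I)=1$ exactly when the induced two-block split of the $c_i$'s is balanced, so even picking a \emph{single} maximum-entropy task already encodes balanced partition. You instead keep $k$ general and reduce from \textsc{Independent Set}. Your deferred gadget does exist, and it is one line: take $P(x) = 2^{-n}\bigl(1+\epsilon\sum_{\{u,v\}\in E}(-1)^{x_u+x_v}\bigr)$ over $x\in\{0,1\}^V$ with $0<\epsilon\le 1/(|E|+1)$ (which guarantees positivity). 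Marginalizing onto a subset $S$ kills every character not supported inside $S$, leaving $P_S(x_S)=2^{-|S|}\bigl(1+\epsilon\sum_{e\in E,\,e\subseteq S}(-1)^{x_u+x_v}\bigr)$: this is exactly uniform when $S$ is independent (all surviving coefficients vanish) and strictly non-uniform as soon as $S$ contains an edge. Hence $\max_{|S|=k}H(X_S)\ge k$ iff $G$ has a size-$k$ independent set, and both directions of your correspondence are immediate. This also vindicates your instinct that non-parity correlations are needed, while showing that no delicate control of higher-order interactions is required --- the construction simply has none.

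The one genuine gap is your closing remark that ``the polynomial size of the reduction is routine.'' It is not; it is the only place the plan can break. The gadget above has full support of size $2^n$, and this is unavoidable: a uniform marginal on a size-$k$ independent set forces at least $2^k$ support points (every pattern on $S$ must be hit), while \textsc{Independent Set} is only NP-hard when $k$ grows super-logarithmically (for $k=O(\log n)$ it is decidable in $n^{O(k)}$ time). So under the encoding the paper itself adopts in its DTaskSelect formulation --- the distribution given as an explicit list of outputs with nonzero probability --- your reduction produces an instance exponentially larger than the graph and does not establish NP-hardness; you would need to permit succinct inputs (e.g., the distribution specified by its low-degree Fourier form or a graphical model). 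In fairness, the paper's own reduction has the mirror-image blow-up: it keeps only $s$ nonzero outputs but uses $n=2^s$ facts, each output needing $2^s$ bits to write down, so neither argument is airtight about encodings. Still, the paper at least keeps the support sparse, whereas your gadget provably cannot, so this step deserved to be flagged as a live obstacle rather than dismissed as routine.
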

\begin{proof}
To prove it, it is sufficient to show that the decision version of Task Selection problem is an \emph{NP-Complete} problem.

DTaskSelect: Given a target value $H_t$, is there a selection of $k$ tasks among the $n$ possible facts to reach $H(\mathcal{T}) \geq H_t $.

It is clear that DTaskSelect problem is an \emph{NP} problem. Now we will show that $DTaskSelect\in NPC$ by reducing the Partition Problem which stated as an NPC problem\cite{karp1972reducibility} to it.

In order to make it clear, we define the input probability distribution of DTaskSelect as a set of $\{O_{id},P\}$, where $O_{id}$ is a identifier for the possible outputs in range $[1,2^n]$. Outputs with IDs not given by the input set
have probability of $0$. Meanwhile, each output can be described by a $n$-digits binary number where $i$-th most significant digit is $0$ or $1$ stands for whether fact $f_i$ is true or false respectively.
And we define $O_1, O_2, ..., O_n$ with following expression described format:  $O_i = ((0^x1^x)^i)_2$, where $a^t$ means character $a$ repeat $t$ times and $x=\frac{n}{2^i}$.

The input of PARTITION Problem is a set of numbers $(c_1, c_2, ..., c_s)$. We define $Sum = \sum_{i=1}^s c_s$ and $x_i = \frac{c_i}{Sum}$. By setting $n=2^s$, $o_i=x_i$ for $i=1, 2, ...,n$, $k = 1$, $P_c = 1$ and $H(\mathcal{T}) = 1$, PARTITION is reduced to DTaskSelect.
As we set $P_c = 1$ and $k=1$, assume $T=\{f_i\}$, $H(\mathcal{T})=H(f_i)$. We can calculate $P(f_i)$ by summing up the probability of all the outputs with $f_i=0$ and all the outputs with $f_i=1$ respectively.
By the output format defined above, $f_i$ in outputs $o_1, o_2, ..., o_n$ is exactly a binary number $i$ from the most significant digit to the least.
Thus, with $f_1, f_2, ..., f_n$, we successfully enumerate all the numbers between $0$ to $2^s-1$.
If there exists a partition to get two equal summation sets, we construct a binary number $I=d_1d_2...d_s$ by set $d_i = 0$ iff. $c_i$ and $c_1$ are in the same set.
By choosing $f_I$, we can sum up the exact numbers in the two sets respectively and get the same results of $0.5$. Then we have a selection to reach $H(f_I) = 1$.
On the other side, if there exists a selection $f_I$ satisfy our requirement, where $I=d_1d_2...d_s$, we can do partition to the set to $\{c_i|i \in [1,s],d_i = 0\}$ and $\{c_i|i \in [1,s],d_i = 1\}$.
Easily we can check that the two sets have same summation result.
\end{proof}

\subsection{Approximate Solution}\label{sect:algorithm}

Due to the NP-hardness of the task selection, we can not find the optimal solution within polynomial time unless P=NP.
However, it is known that the conditional entropy is a submodular function \cite{krause2005note}, and the problem of selecting
a k-element subset maximizing a monotone submodular function
can be approximated with a performance guarantee of $(1-1/e)$,
by iteratively selecting the most uncertain variable given the ones
selected so far \cite{nemhauser1978analysis}.
Similarly, we can select $k$ tasks iteratively with a modified greedy algorithm to get a $(1-\frac{1}{e})$-approximate solution.

As we discussed in Section \ref{sect:Simplify}, we want to select $k$ tasks with the maximized entropy of them. To make it clear, we define the gain of evolving one fact $f_j$ into task set $\mathcal{T}$ as
$\rho_j(\mathcal{T}) = H(\mathcal{T}\cup\{f_j\})-H(\mathcal{T})$.

\begin{algorithm}[t]
\caption{Approximation Algorithm for Task Selection}
\label{algo:approx}
\begin{algorithmic}[1]
\REQUIRE
$\mathcal{F}$ , $k>0$;
\ENSURE
Selection $\mathcal{T}\in\mathcal{F}$.

\STATE{Let $\mathcal{T}^0 = \phi$, $\mathcal{F}^0 = \mathcal{F}$, and set $i=1$.}

\REPEAT
    \STATE{Select $t(i)\in \mathcal{F}^{i-1}$ for which $\rho_{t(i)}(\mathcal{T}^{i-1})=\max_{t\in \mathcal{F}^{i-1}}\rho(\mathcal{T}^{i-1})$.}\label{code:select}
    \STATE{Set $\rho_{i-1}=\rho_{t(i)}(\mathcal{T}^{i-1})$.}

    \IF{$\rho_{i-1}\leq 0$}
        \STATE{Break loop with the set $\mathcal{T}^{K^*}$, where $K^* = i-1 < k$}
    \ELSE
        \STATE{Set $\mathcal{T}^i = \mathcal{T}^{i-1} \cup \{t(i)\}$.}
        \STATE{And set $\mathcal{F}^i = \mathcal{F}^{i-1} = \{t(i)\}$.}
    \ENDIF

    \STATE{Set $i\gets i+1$.}
\UNTIL{$i=k+1$(i.e.$K^* = k$).}
\RETURN $\mathcal{T}^{K^*}$
\end{algorithmic}
\end{algorithm}

In Algorithm \ref{algo:approx}, we select $K^*$ facts instead of $k$ facts, where $K^* \leq k$.
In most cases, $K^* = k$. If we cannot get any benefit from asking one more task, we will stop selecting and will cause $K^* < k$.
However, unless $k > n$ or we are certain about every fact remaining to be selected, we will continue asking until reaching $k$ tasks because of the following theorem.

\begin{theorem}\label{theorem:Kstar}
Utility will be improved whenever an uncertain fact exists to be selected to ask.
\end{theorem}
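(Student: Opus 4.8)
The plan is to show that the statement is exactly the claim $\rho_j(\mathcal{T}) > 0$ for any still-uncertain fact $f_j$ that remains to be selected, so that the greedy loop of Algorithm~\ref{algo:approx} never reaches its break condition $\rho_{i-1}\leq 0$ while such a fact exists. Writing $A_{\mathcal{T}}$ for the random crowd-answer vector over the already-selected facts (so that $H(\mathcal{T}) = H(A_{\mathcal{T}})$ as noted in Section~\ref{sect:Simplify}) and $A_j$ for the crowd answer to $f_j$, the first step is to rewrite the gain as a conditional entropy. By the chain rule,
\begin{displaymath}
\rho_j(\mathcal{T}) = H(A_{\mathcal{T}}, A_j) - H(A_{\mathcal{T}}) = H(A_j \mid A_{\mathcal{T}}) \geq 0,
\end{displaymath}
with equality if and only if $A_j$ is almost surely a deterministic function of $A_{\mathcal{T}}$. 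Thus everything reduces to ruling out that degenerate case.

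The second step uses the crowd model. Since each task is answered independently with accuracy $P_c$, I would write $A_j = f_j \oplus E_j$, where $E_j$ is a $\mathrm{Bernoulli}(1-P_c)$ crowd error that is independent of the ground truth of $f_j$ and of $A_{\mathcal{T}}$ (the latter depends only on other facts and other, independent, crowd errors). Conditioning on any realization $A_{\mathcal{T}} = a$ and setting $q_a = \Pr(f_j\text{ true}\mid A_{\mathcal{T}} = a)$, a short computation gives
\begin{displaymath}
\Pr(A_j\text{ true}\mid A_{\mathcal{T}} = a) = (1-P_c) + (2P_c - 1)\,q_a.
\end{displaymath}
For $P_c \in [0.5,1)$ this value lies in $[1-P_c,\,P_c] \subset (0,1)$ for \emph{every} $a$, so $H(A_j \mid A_{\mathcal{T}} = a) > 0$ and hence $\rho_j(\mathcal{T}) > 0$ regardless of the fact: the crowd's own noise already prevents degeneracy. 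At the boundary $P_c = 1$ the expression collapses to $q_a$, which lies strictly in $(0,1)$ precisely when $f_j$ is still uncertain given the collected answers, i.e.\ exactly under the hypothesis. In every case an uncertain fact forces $\rho_j(\mathcal{T}) > 0$, so the loop keeps selecting until $k$ tasks are chosen.

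A final step connects $\rho_j$ to the genuine quality $Q$. Because adding a task also raises the constant $kH(Crowd)$, the true increment of $\Delta Q = H(\mathcal{T}) - kH(Crowd)$ equals $\rho_j(\mathcal{T}) - H(Crowd) = I(A_j; f_j \mid A_{\mathcal{T}})$, using $H(A_j \mid f_j, A_{\mathcal{T}}) = H(Crowd)$; this mutual information is strictly positive exactly when the crowd is informative ($P_c > 0.5$) and $f_j$ is uncertain, which is the reading that literally justifies the phrase \emph{utility will be improved}.

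The hard part will be the rigorous justification that $E_j$ is independent of $A_{\mathcal{T}}$, since it is precisely this independence that keeps $\Pr(A_j\text{ true}\mid A_{\mathcal{T}}=a)$ bounded away from $0$ and $1$; it rests squarely on the per-task independence assumption of the crowd model. Equally delicate is translating the informal word ``uncertain'' into the precise condition $q_a \in (0,1)$ for some $a$ of positive probability (equivalently $H(f_j \mid A_{\mathcal{T}}) > 0$), and checking the two boundary values $P_c = 0.5$ and $P_c = 1$ separately, as they are the only points where the quantitative behaviour changes.
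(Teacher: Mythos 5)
Your proof is correct, and it follows the same skeleton as the paper's own argument --- both reduce the gain via the chain rule to the conditional entropy of the new answer, $\rho_j(\mathcal{T}) = H(Ans_j \mid A_{\mathcal{T}})$, and then invoke the crowd noise model to show strict positivity --- but your middle step is genuinely different and, in fact, repairs two flaws in the paper's version. The paper asserts $H(Ans_j^{\mathcal{T}'} \mid \mathcal{T}) = H(Ans_j^{\mathcal{T}'})$ ``as each task is answered independently,'' which conflates independence of the crowd \emph{errors} with independence of the \emph{answers}: the answers to different tasks are correlated through the joint distribution over facts (this correlation is the paper's entire premise), so dropping the conditioning is unjustified. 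The paper then concludes positive entropy merely from $P(Ans_j = \mathrm{true}) > 0$, which is insufficient --- one also needs the probability bounded away from $1$. Your pointwise conditioning avoids both problems: for every realization $a$ you compute $\Pr(A_j = \mathrm{true} \mid A_{\mathcal{T}} = a) = (1-P_c) + (2P_c-1)q_a \in [1-P_c, P_c]$, which lies strictly inside $(0,1)$ whenever $P_c < 1$, giving $H(A_j \mid A_{\mathcal{T}} = a) > 0$ without any unconditional-independence claim. Your treatment of the boundaries is also more faithful to the model (which allows $P_c \in [0.5,1]$): at $P_c = 1$ the uncertainty hypothesis $q_a \in (0,1)$ is exactly what is needed, and your final mutual-information step $I(A_j; f_j \mid A_{\mathcal{T}}) = \rho_j(\mathcal{T}) - H(Crowd)$ correctly observes that the \emph{actual} utility $\Delta Q$ improves only when $P_c > 0.5$ and $f_j$ is uncertain --- a reading of the theorem statement that the paper's proof (which only establishes $\rho_{i-1} > 0$, i.e.\ that the greedy loop does not break early) never makes precise. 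In short: same decomposition, but your conditional computation buys rigor where the paper's independence shortcut fails, at the cost of a slightly longer case analysis.
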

\begin{proof}
Assume in the algorithm we already selected out a task set $\mathcal{T}$ and there is a fact $f_j$ with $0 < P(f_j = true|\mathcal{T}) < 1$.
Based on definition, $\rho_j(\mathcal{T}) = H(\mathcal{T}\cup\{j\})-H(\mathcal{T}) = H(Ans_j^\mathcal{T'}|\mathcal{T})$, where $\mathcal{T'} = \mathcal{T}\cup f_j$.
As each tasks are answered independently, we have $H(Ans_j^\mathcal{T'}|\mathcal{T}) = H(Ans_j^\mathcal{T'})$.
Since given $\mathcal{T}$, we are uncertain about whether fact $f_j$ is true or not, $P(Ans_j^\mathcal{T'} = true)=P_cP(f_j = true|\mathcal{T}) + (1-P_c)P(f_j = false|\mathcal{T}) > 0$.
Then we have $\rho_{i-1}>0$, we will select one more task by the algorithm.
\end{proof}

In the Algorithm \ref{algo:approx}, we use marginal distribution to calculate $\rho$ accordingly. We do not have to calculate the marginal distribution from nothing in each iteration.
Note that in Equation \ref{equ:Ans}, if we extend $\mathcal{T}$ with an additional fact, the only modification is that \emph{\#Same} or \emph{\#Diff} will be increased by $1$.
That is, we can take every fact into our task set first. And in each iteration, we can separate the task set into two subsets according to the value of the fact chosen in this iteration. We double the number of subsets in each iteration and we only need a linear scan to finish this operation.
By summing up probabilities in each set we get a marginal distribution for this iteration and can calculate the entropy accordingly.

\begin{table}
\centering \caption{RUNNING EXAMPLE - Answer Joint Distribution}
\centering
\begin{tabular}{|c|c|c|c|c|c|} \hline
$Ans_i$ & $f_1$ & $f_2$ & $f_3$ & $f_4$ & $P(a_i)$\\ \hline
$a_1$ & F & F & F & F & $0.049$\\ \hline
$a_2$ & F & F & F & T & $0.050$\\ \hline
$a_3$ & F & F & T & F & $0.063$\\ \hline
$a_4$ & F & F & T & T & $0.055$\\ \hline
$a_5$ & F & T & F & F & $0.071$\\ \hline
$a_6$ & F & T & F & T & $0.049$\\ \hline
$a_7$ & F & T & T & F & $0.087$\\ \hline
$a_8$ & F & T & T & T & $0.077$\\ \hline
$a_9$ & T & F & F & F & $0.047$\\ \hline
$a_{10}$ & T & F & F & T & $0.051$ \\ \hline
$a_{11}$ & T & F & T & F & $0.052$\\ \hline
$a_{12}$ & T & F & T & T & $0.056$\\ \hline
$a_{13}$ & T & T & F & F & $0.065$\\ \hline
$a_{14}$ & T & T & F & T & $0.071$\\ \hline
$a_{15}$ & T & T & T & F & $0.073$\\ \hline
$a_{16}$ & T & T & T & T & $0.085$\\
\hline\end{tabular}
\label{table:ans_dis}
\end{table}

Look back at the running example in Table \ref{table:Distribute}.
We can first calculate $\{P(Ans_j^{\mathcal{T}_{all}})\}$ which is the probability distribution of answers if we take every fact as our task set as shown in Table \ref{table:ans_dis} by setting $P_c = 0.8$. In the first round, we calculate marginal distribution of every single fact and we select $f_1$ in this round because entropy of selecting $f_1$, $H(\{Ans_j^{\{f_1\}}\}) = 1$, is the maximum entropy among all selections.
And we now calculate the entropy for the second round. We firstly separate the task set into two subsets by the value of $f_1$. Then we separate the two subsets by every fact remaining to be selected.
Then we can get $H(\{Ans_j^{\{f_1, f_4\}})\} = 1.997$ and $f_4$ is the best choice in this iteration.
Following algorithm \ref{algo:approx}, we get our selection $f_1$ and $f_4$.

\subsection{Pruning}
By considering the highest possible total gain value after selecting a fact in each iteration, we can prune some facts and do not need to consider them any more in the following iterations of the approximation solution.
That is, we need an upper bound for $\rho_{j\cup\mathcal{S}}(\mathcal{T}^i)$, where $\mathcal{S}$ is any possible further task set.
And the final task set is $\mathcal{T}^i\cup f_j \cup \mathcal{S}$.

Define that Task set in $i$-th iteration with an additional task $f_j$ as $\mathcal{T}^i_j$.
By definition, we have
\begin{equation}\label{equ:prune1}
\rho_{j\cup\mathcal{S}}(\mathcal{T}^i) = H(\mathcal{T}\cup\{f_j\}\cup\mathcal{S}) - H(\mathcal{T}).
\end{equation}
If for a fact whose highest total gain is less than any gain we have in this step, we do not need to consider it any more in this task selection algorithm, that is we prune it.

With property of entropy, we have
\begin{eqnarray}\label{equ:prune2}
H(\mathcal{T}\cup\{f_j\}\cup\mathcal{S}) &=& H(\mathcal{T}\cup\{f_j\}) + H(\mathcal{S}) - I(\mathcal{T}\cup\{f_j\}\;\mathcal{S}) \nonumber\\
& \leq & H(\mathcal{T}\cup\{f_j\}) + H(\mathcal{S}) \nonumber\\
& \leq & H(\mathcal{T}\cup\{f_j\}) +\log{(k-|\mathcal{T}|-1)}
\end{eqnarray}

In each iteration, as the base values $H(\mathcal{T})$ are the same, considering total gain is equivalent to considering the possible total entropy $H(\mathcal{T}\cup\{f_j\}\cup\mathcal{S})$.
Since that we always gain more by asking one more task by Theorem \ref{theorem:Kstar}, we cannot apply the pruning strategy for the first fact we tried in each iteration.
After calculation $H(\mathcal{T}\cup\{f_j\})$ for the first time, we use it as an initial value of maximum entropy in this iteration.
The maximum entropy will be updated if we get a higher entropy in following enumeration of facts in this iteration.
For any further calculation, we add the entropy result with $\log{(k-|\mathcal{T}|-1)}$ which is a constant in each iteration and can be computed in constant time.
If the upper bound value is lower than the maximum entropy till now, we do not need to consider the related fact as a task any more.
Not only in this iteration but also any possible further iteration.

\begin{theorem}
Pruning $f_j$ for all following selections is safe if $H(\mathcal{T}\cup\{f_j\})+\log{(k-|\mathcal{T}|-1)}<\max_t{H(\mathcal{T}\cup\{f_t\})}$.
\end{theorem}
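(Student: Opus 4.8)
The plan is to prove the contrapositive operationally: I will show that, once the stated inequality holds, no task set the greedy algorithm can ever output contains $f_j$, so deleting $f_j$ from the candidate pool in the current iteration and in all subsequent ones leaves the algorithm's output unchanged (this is exactly what \emph{safe} pruning means). The only facts I need are three elementary properties of the joint entropy $H(\cdot)$: monotonicity ($H(A)\le H(B)$ whenever $A\subseteq B$), the subadditivity bound $H(A\cup B)\le H(A)+H(B)$ (the nonnegative-mutual-information step already spelled out in Equation~\ref{equ:prune2}), and the crude ceiling $H(\mathcal{S})\le\log(k-|\mathcal{T}|-1)$ on any admissible continuation $\mathcal{S}$, also from Equation~\ref{equ:prune2}. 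I also use the reduction of Section~\ref{sect:Simplify}: within a fixed iteration the base term $H(\mathcal{T})$ is common to every candidate, so maximizing the gain $\rho$ is the same as maximizing the resulting entropy $H(\mathcal{T}\cup\{\cdot\})$.

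Let $\mathcal{T}$ be the committed set at the current iteration, with $|\mathcal{T}|<k$, and let $f_{t^\ast}$ attain $\max_t H(\mathcal{T}\cup\{f_t\})$. First I would establish a lower bound on the algorithm's final output $\mathcal{T}_{\mathrm{greedy}}$. Since greedy selects the candidate of largest resulting entropy, it commits to $\mathcal{T}\cup\{f_{t^\ast}\}$ at this step, and monotonicity propagates through all later steps to give
\begin{displaymath}
H(\mathcal{T}_{\mathrm{greedy}})\;\ge\;H(\mathcal{T}\cup\{f_{t^\ast}\})\;=\;\max_t H(\mathcal{T}\cup\{f_t\}).
\end{displaymath}
For the matching upper bound I would argue by contradiction: suppose $f_j\in\mathcal{T}_{\mathrm{greedy}}$ and write $\mathcal{S}=\mathcal{T}_{\mathrm{greedy}}\setminus(\mathcal{T}\cup\{f_j\})$, a set of at most $k-|\mathcal{T}|-1$ remaining tasks. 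Applying Equation~\ref{equ:prune2} to $\mathcal{T}\cup\{f_j\}\cup\mathcal{S}=\mathcal{T}_{\mathrm{greedy}}$ yields
\begin{displaymath}
H(\mathcal{T}_{\mathrm{greedy}})\;\le\;H(\mathcal{T}\cup\{f_j\})+\log(k-|\mathcal{T}|-1).
\end{displaymath}

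Combining the two displays with the theorem's hypothesis $H(\mathcal{T}\cup\{f_j\})+\log(k-|\mathcal{T}|-1)<\max_t H(\mathcal{T}\cup\{f_t\})$ produces $H(\mathcal{T}_{\mathrm{greedy}})<\max_t H(\mathcal{T}\cup\{f_t\})\le H(\mathcal{T}_{\mathrm{greedy}})$, a contradiction; hence $f_j\notin\mathcal{T}_{\mathrm{greedy}}$ and $f_j$ can be discarded once and for all. As a consistency check, the rule never prunes the current best fact, because setting $t=t^\ast$ makes the left-hand side $H(\mathcal{T}\cup\{f_{t^\ast}\})+\log(k-|\mathcal{T}|-1)$, which is not smaller than $\max_t H(\mathcal{T}\cup\{f_t\})=H(\mathcal{T}\cup\{f_{t^\ast}\})$ whenever the log term is nonnegative, so the candidate pool cannot be emptied prematurely.

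The step I expect to be the main obstacle is ensuring that a bound derived at the \emph{current} iteration remains valid for \emph{every} future iteration, since $f_j$ might in principle be added much later when $\mathcal{T}$ has grown. The contradiction framing is what neutralizes this: the decisive inequality is stated directly about the final set $\mathcal{T}_{\mathrm{greedy}}$, and the bound $H(\mathcal{S})\le\log(k-|\mathcal{T}|-1)$ depends only on the leftover budget $k-|\mathcal{T}|-1$, not on the order in which the leftover tasks are chosen. The remaining technical caveat is the boundary case $k-|\mathcal{T}|-1\le 0$, where $\mathcal{S}=\emptyset$ and one should substitute $H(\mathcal{S})=0$ for the otherwise ill-defined log term; the identical contradiction still closes the argument.
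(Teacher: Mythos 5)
Your proposal is correct and follows essentially the same route as the paper's proof: both argue by contradiction on a final greedy set containing $f_j$, bounding its entropy from below by $\max_t H(\mathcal{T}\cup\{f_t\})$ (the greedy choice at the current iteration plus monotonicity of entropy under adding tasks, i.e.\ Theorem~\ref{theorem:Kstar}) and from above by $H(\mathcal{T}\cup\{f_j\})+\log{(k-|\mathcal{T}|-1)}$ via the subadditivity bound of Equation~\ref{equ:prune2}. Your write-up is if anything more careful than the paper's, explicitly isolating the two bounds, noting why the order-independence of the entropy neutralizes the ``added much later'' concern, and handling the degenerate case $k-|\mathcal{T}|-1\le 0$, which the paper leaves implicit.
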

\begin{proof}
The correctness of this pruning can be easily addressed by contradiction proving.
Assume that our greedy algorithm returns a task set with a fact $f_j$ which is pruned in iteration $i$.
Because $f_j$ is pruned, there exists a fact $f_k$ in iteration $i$ with $H(\mathcal{T}^i_k) \geq H(\mathcal{T}^i_j)+\log{(k-|\mathcal{T}|-1)}$ and $f_j$ is not in set $\mathcal{T}^{i}$.
If $f_j$ is in final task set, that means in $t$-th iteration we selected $f_j$ where $t>i$.
However, the entropy is not related to task sequence and will increase with any additional task selected.
If we take $\mathcal{T^i}$ and $f_j$ as a subset of final task set, we cannot get entropy greater than $H(\mathcal{T}^i_k)$.
However, the final task set entropy must be no smaller than $H(\mathcal{T}^i_k)$ as we stated before that every task we evolved we can get higher gain.
Here comes the contradiction and the pruning is safe.
\end{proof}

\subsection{Greedy with Preprocessing}
By the Algorithm \ref{algo:approx}, we can achieve result with approximation rate $(1-\frac{1}{e})$.
In each iteration, we need to calculate the entropy for $O(n)$ possible selections.
According to the Equation \ref{equ:goal}, we suppose to select $k$ tasks to achieve highest $H(\mathcal{T}) = H(Ans_i^{\mathcal{T}})$.
To be more specific, the closed-form of computing $H(Ans_i^{\mathcal{T}})$ is:
\begin{eqnarray}
H(Ans_i^{\mathcal{T}}) &= & - \sum_{j=1}^{2^n} [P(o_j)P_c^{\#\textrm{Same}}(1-P_c)^{\#\textrm{Diff}} \nonumber\\
 & & \cdot \log{(P(o_j)P_c^{\#\textrm{Same}}(1-P_c)^{\#\textrm{Diff}})}]\nonumber
\end{eqnarray}

Note that the Algorithm \ref{algo:approx} runs in $O(k\cdot T)$ time, where $T$ is time cost of line \ref{code:select} in the algorithm.
If we calculate it with a brute force method, we will have to calculate the marginal distribution of each choice.
For any of choice, we have $O(2^k)$ items in the marginal distribution and for each item we need $O(k|\mathcal{O}|)$ to calculate the probability as we need time $k$ for counting the number of the same and the different judgments between the output items with the answer.
Thus, we need $O(2^knk|\mathcal{O}|)$ time for a single step which is a huge cost as it will repeat $k$ times and finally cost $O(2^knk^2|\mathcal{O}|)$ in total.

In order to reduce the overhead, we propose a preprocessing in $O(|\mathcal{O}|^2)$ time, which can reduce the total cost of the algorithm to $O(kn\mathcal{O})$.
Even though the preprocessing costs much, it can be finished off-line, and it will be useful when we need to generate multiple task sets.
Further more, the preprocessing has good property and can be solved by parallel computing or the MapReduce framework in constant time with $|\mathcal{O}|^2)$ cores.

The preprocessing is the calculation of Answer Joint Distribution as
shown in the running example of Table \ref{table:ans_dis}. As the
size of the answer joint distribution is also $|\mathcal{O}|$, we
need to scan the output distribution data to calculate each item of
the answer joint distribution which costs $O(|\mathcal{O}|^2)$ in
total. However, as the scan sequence does not affect the result, we
just need to calculate the summation of each cross relationship
between answers and outputs. Each of sub-program is responsible for
one single counting and calculation of
$P_c^{\#\textrm{Same}}(1-P_c)^{\#\textrm{Diff}}$. We sum them up by
the \emph{id}s of the answer the sub-program responsible for which
is easy to be solved by the MapReduce
framework\cite{dean2008mapreduce}, any other parallel computing or
any many core computing.

With the preprocessing result, we can significantly accelerate line \ref{code:select} of the Algorithm \ref{algo:approx}.
In that step, we need to calculate the marginal distribution for each choice and facts already chosen.
Since every marginal probability is a summation of a set of output items, we can separate original distribution $O(k)$ times and calculate by an additional scan totally in $O(k|\mathcal{O}|)$ time which can be reduced to $O(|\mathcal{O}|)$ later.

\begin{algorithm}[t]
\caption{Compute Marginal Distribution}
\label{algo:margin}
\begin{algorithmic}[1]
\REQUIRE
Answers joint distribution , $\mathcal{T}$;
\ENSURE
Marginal Distribution $P(\{Ans^{\mathcal{T}}_i\})$.

\STATE{Initially, answer set has one part as a whole.}

\REPEAT
    \STATE{Select a fact $f_i\in \mathcal{T}$.}
    \STATE{$\mathcal{T} \gets \mathcal{T} - {f_i}$.}
    \STATE{Separate each part of the Answer joint distribution by judgment of $f_i$ into two new parts}
\UNTIL{$\mathcal{T}=\phi$.}
\STATE{Sum up each separated part to gets $\{p_i\}$}.
\RETURN $\{p_i\}$.
\end{algorithmic}
\end{algorithm}

With the Algorithm \ref{algo:margin}, we can simply sum up $-\sum_{p_i} p_i\log{p_i}$ to get $H(\mathcal{T})$.
The correctness of this algorithm is straightforward as we simply separate output items with same judgment of $\mathcal{T}$ into a same part and sum them up.
And this algorithm scans the answers table for each separation, which takes $O(|\mathcal{O}|)$ for each separation.
There are $k$ separations in total as we remove one fact each time from $\mathcal{T}$ until $\mathcal{T}$ is an empty set and $\mathcal{T}$ has $k$ items at most.
Then we can get that the total cost of this algorithm is $O(k|\mathcal{O}|)$ time and the total cost for line \ref{code:select} in the Algorithm \ref{algo:approx} is $O(nk|\mathcal{O}|)$.

Further more, as we select only one fact in each iteration, the total cost for the step can be reduced to $O(n|\mathcal{O}|)$.
Just by storing separation result of last iteration, we only needs one separation and calculation for each choice which costs $O(|\mathcal{O}|)$ time.
That is, right after each calculation of the marginal distribution, we calculate the entropy $H(\mathcal{T})$ and record the separating result if this is the maximum entropy in this iteration.
For the next iteration, we only need to add one task to task set $\mathcal{T}$, which only requires one separation for that.
With such a strategy, we can reduce the cost of this step to $O(n|\mathcal{O}|)$ and the total cost is reduced to $O(nk|\mathcal{O}|)$.

With an input, our system first selects a task set with brute force
or approximation solution and after collecting answers from crowd,
our system updates the output set. Such a select-collect-update
procedure is a round of our system. As long as we have budget, we
run another round to get the final result. The prune strategy and
preprocessing strategy are proposed to accelerate the approximation
solution.

\section{Extension -- Query Based CrowdFusion}
\label{section:extension}
Sometimes users of data fusion system are only interested in some aspects of an entity.
For instance, some users want to see what effectiveness of political policies in different population and major people condition.
Under such a requirement, we need accurate population and people information for each country or region.
However, continent is still a task worth to be asked to the crowd even if we do not care about continent.
This is because there is relationship between continent and major people. Continent information also infers population, e.g. Asia countries tend to have large population than any other continent's countries.
Query based CrowdFusion is designed to solve such a condition.
If we are not interested in all aspects, we can get higher accuracy by asking fewer tasks.

The input of this system remains the same as other aspects may support some of interested facts as described before.
In order to describe the query-based demands, the utility function and the outputs will be modified accordingly.
As shown later, query-based condition also admits submodularity and can be solved by the previous Algorithms \ref{algo:approx}
with simple modifications to achieve the same approximation rate.

\subsection{Data Model}
Note that in the original data model shown in section \ref{sect:datamodel}, each output $o_i$ is a true-or-false judgment for all facts.
As we are only interested in a subset of facts for this new scenario, we do not have to judge all facts.
Define that facts of interest(FOI) set $\mathcal{I}$ which is a subset of original fact set, $\mathcal{I}\subseteq\mathcal{F}$.
One output $o_i$ is modified to a set of true-or-false judgment of FOI, i.e. $o_i = \{(f_i, state)|f_i \in \mathcal{I}, state\in \{true, false\}\}$.
Similarly, we define utility function like $Q(\mathcal{I}) = -H(\mathcal{I})$.

We cannot use original updating methods by replacing output set to the new one,
because not all answers will appear in the output set, i.e. $P(Ans_i^{\mathcal{T}}|o_i) \neq P_c^{\#\textrm{Same}}(1-P_c)^{\#\textrm{Diff}}.$
In order to calculate it, we need to use the original output set to update the probability of all the outputs,
and then calculate the marginal distribution to get the outputs about FOI.
\subsection{Solution}
Obviously, query based CrowdFusion is a general case of CrowdFusion since we can reduce query based CrowdFusion to overall cases by setting $\mathcal{I}=\mathcal{F}$.
Thus, query based CrowdFusion is at least NP-hard and one can hardly find Polynomial time algorithm for it.

Fortunately, selecting a set of tasks to ask in order to maximize query-based utility function can be approximated.
By properties of entropy, we have $Q(\mathcal{I}|\mathcal{T}) = H(\mathcal{T}) - H(\mathcal{I},\mathcal{T})$.
Suppose there are two task sets $\mathcal{T}$ and $\mathcal{T}'$ satisfying $\mathcal{T} \subseteq \mathcal{T}'$.
Straightforwardly we have $H(\mathcal{I},\mathcal{T})\leq H(\mathcal{I},\mathcal{T}')$. And finally we have for all $\mathcal{T} \subseteq \mathcal{T}'$,
\begin{equation}
Q(\mathcal{I}|\mathcal{T}) \geq Q(\mathcal{I}|\mathcal{T}').
\end{equation}
Thus we proved sub-modularity of the query-based utility function. This query based version problem can be solved by algorithm \ref{algo:approx} by simply substituting $\rho_j(I)$ with $\rho_j(I) = Q(\mathcal{I}|T \cup \{j\})-Q(\mathcal{I}|T)$.

Please note that even if we use same algorithm framework as before to solve this problem, we need to calculate marginal distribution before computing the utility improvement $\rho$ which is a huge cost.
As the algorithm and the function property remains the same, with such a method, we get a $(1-\frac{1}{e})$-approximate solution.

\section{EXPERIMENTAL EVALUATION}\label{sect:exp}
\begin{table*}[ht]
	\centering
	\label{table:time}
	\caption{One Round Average Running Times of Five Approaches}
	\begin{tabular}{|l|c|c|c|c|c|} \hline
		$k$&\emph{OPT}&Approx.&Approx.\&Prune & Approx.\&Pre. & Approx.\&Prune\&Pre. \\ \hline
		$1$ & $37.78$ & $32.60$ & $33.44$ & $\mathbf{1.08}$ & $1.15$\\ \hline
		$2$ & $1475.66$ & $94.73$ & $40.94$& $2.10$ & $\mathbf{1.34}$\\ \hline
		$3$ & $75359.26$ & $242.22$ & $56.66$& $3.10$ & $\mathbf{1.40}$\\ \hline
		$4$ & & $598.15$ & $74.93$& $4.08$ & $\mathbf{1.53}$\\ \hline
		$5$ & & $1401.05$ & $74.32$& $4.96$ & $\mathbf{1.54}$\\ \hline
		$6$ & & $3230.22$ & $76.09$& $5.86$ & $\mathbf{1.48}$\\ \hline
		$7$ & & $7005.02$ & $74.35$& $6.67$ & $\mathbf{1.63}$\\ \hline
		$8$ & & $14611.53$ &$ 74.88$& $7.80$ & $\mathbf{1.53}$\\ \hline
		$9$ & & $29476.42$ & $74.87$& $8.37$ & $\mathbf{1.59}$\\ \hline
		$10$ & & $57198.67$ & $74.34$& $9.39$ & $\mathbf{1.82}$\\ \hline
	\end{tabular}
	
	\small Note: Data listed in this table are time costs for each conditions in second.
\end{table*}
We conducted extensive experiments to evaluate our proposals with real-world datasets on gMission \cite{chen2014gmission}, which is a
public crowdsourcing platform. We focus on
investigating three issues. First, we examine the efficiency of our techniques.
Second, we verify the effectiveness of our approaches, by evaluating
the utility and $F_1$-score. Third, we analyze the reasons that cause the inevitable errors existing the fusion results.

\subsection{Experiment Setup}
\textbf{Crowdsourcing Platform:}We conduct our experiments on \textit{gMission} - a real crowdsourcing platform.
Tasks generated by our CrowdFusion system are posted on that gMission and then pushed to crowd workers.
Each task is answered independently by a number of anonymous gMission users, and they share an accuracy rate $P_c$ as described before.

\begin{figure*} [ht]
	\centering 
	\subfigure[$P_c=0.7$, $F_1$-Score]{ 
		\label{fig:bruteQuality:a} 
		\includegraphics[width=0.3\textwidth]{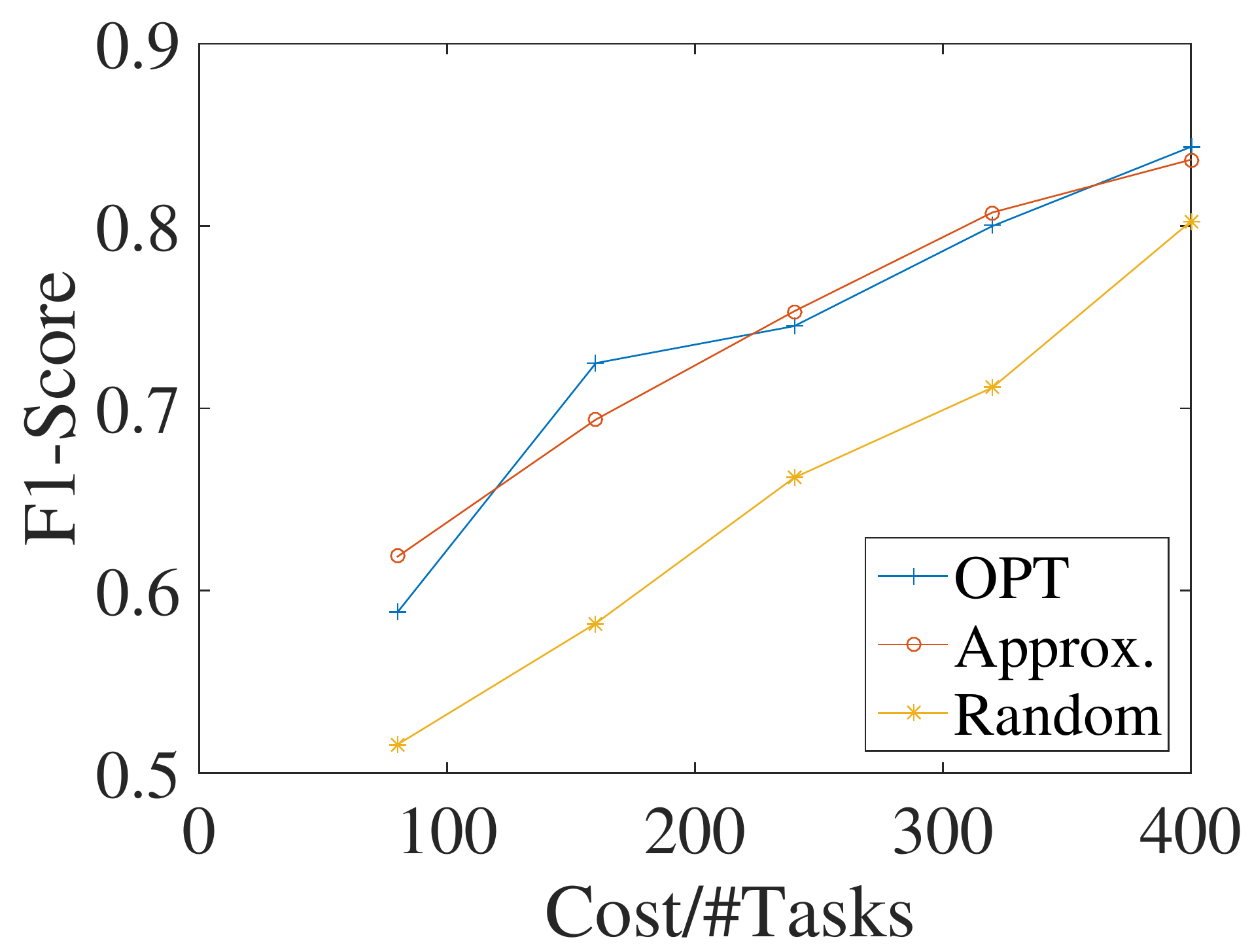}}  
	\subfigure[$P_c=0.8$, $F_1$-Score]{ 
		\label{fig:bruteQuality:b} 
		\includegraphics[width=0.3\textwidth]{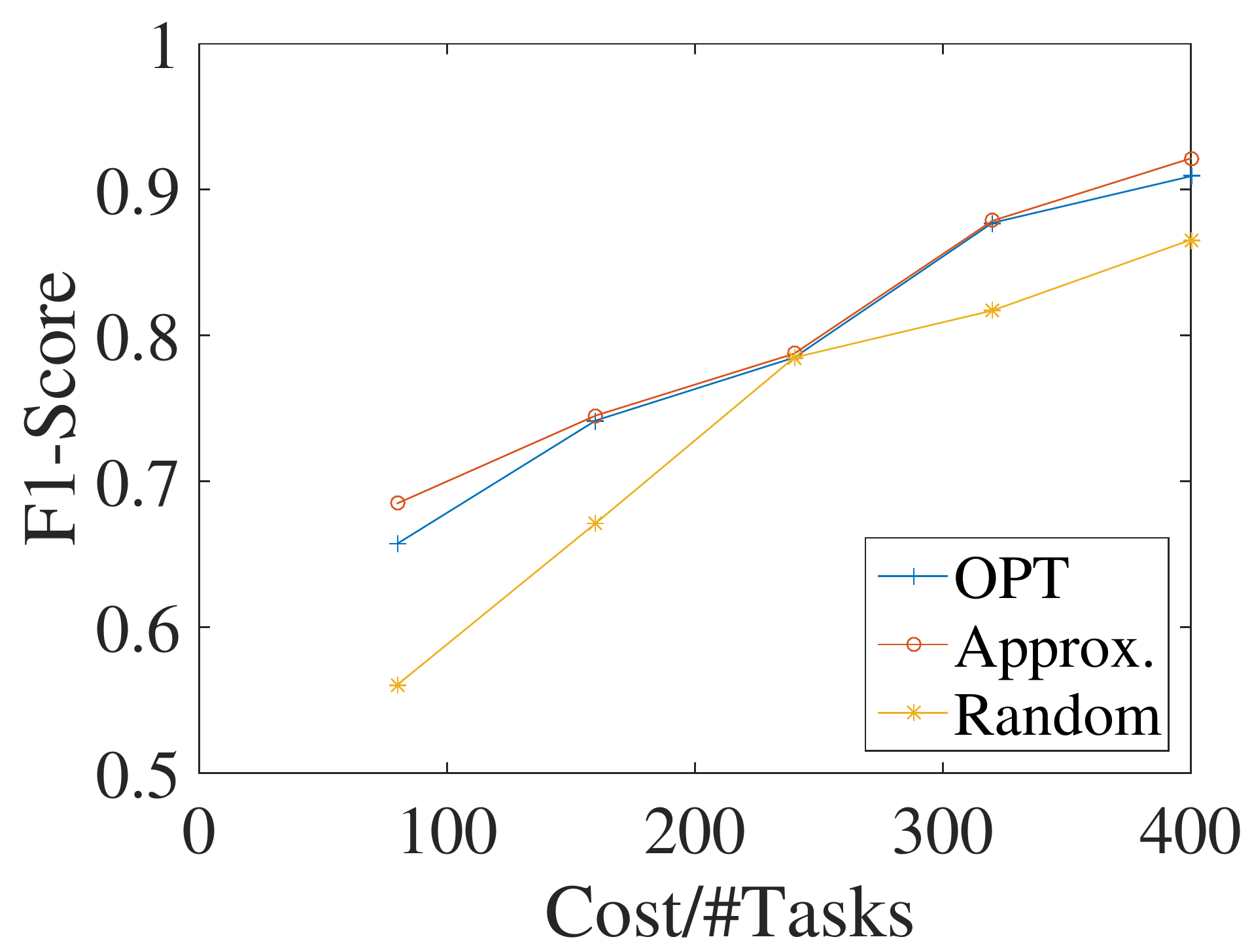}} 
	\subfigure[$P_c=0.9$, $F_1$-Score]{ 
		\label{fig:bruteQuality:c} 
		\includegraphics[width=0.3\textwidth]{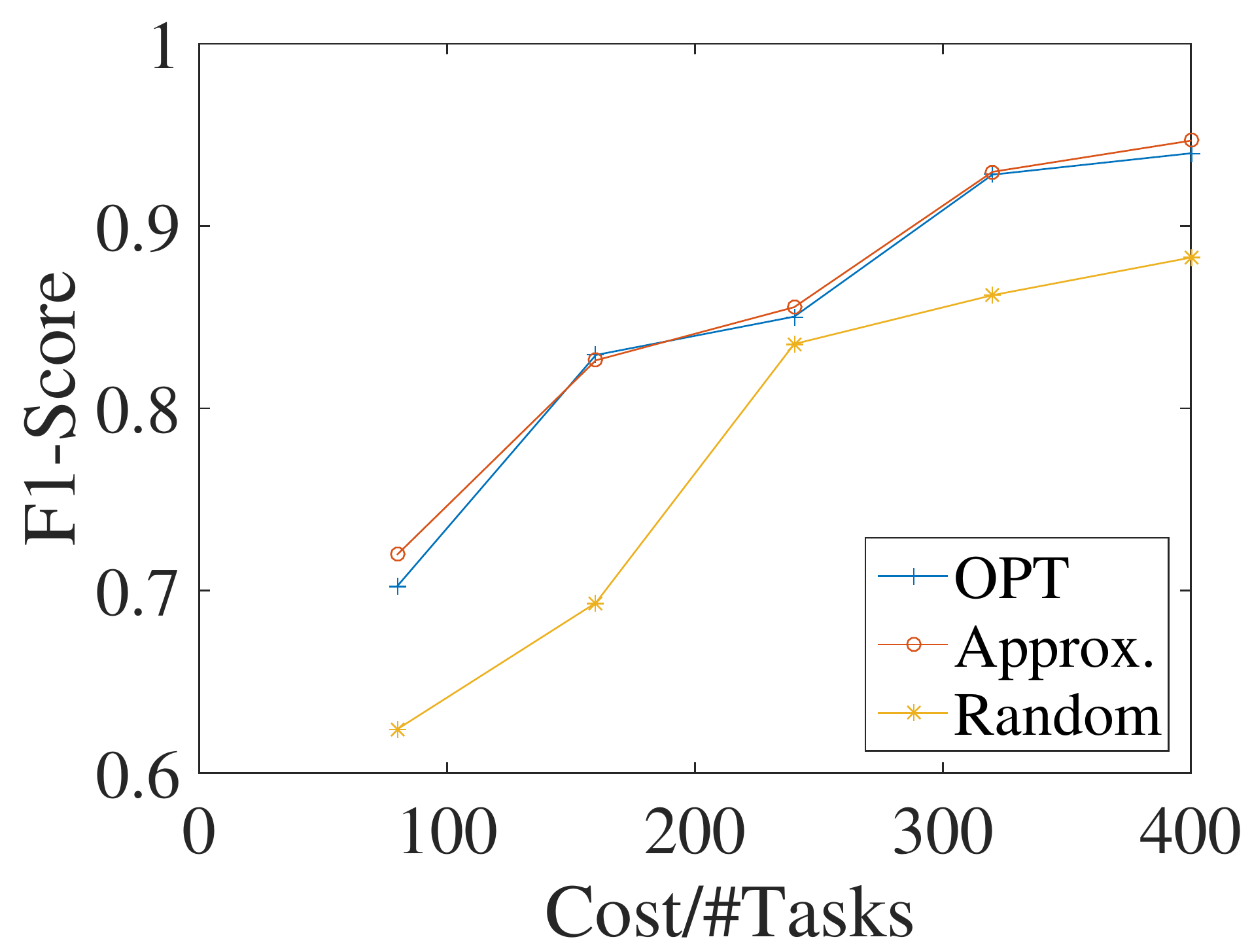}} 
	\subfigure[$P_c=0.7$, Utility]{ 
		\label{fig:bruteQuality:d} 
		\includegraphics[width=0.3\textwidth]{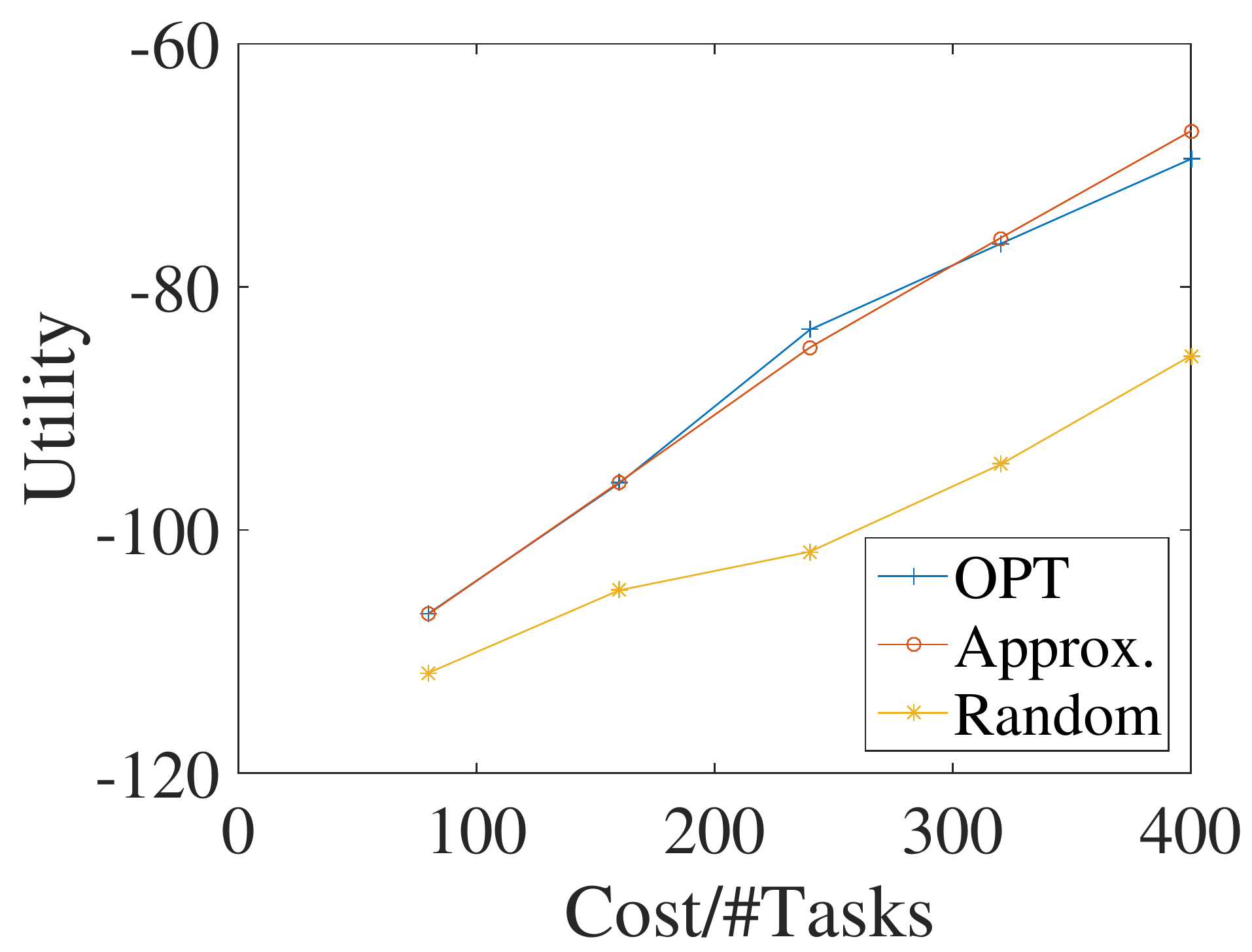}}  
	\subfigure[$P_c=0.8$, Utility]{ 
		\label{fig:bruteQuality:e} 
		\includegraphics[width=0.3\textwidth]{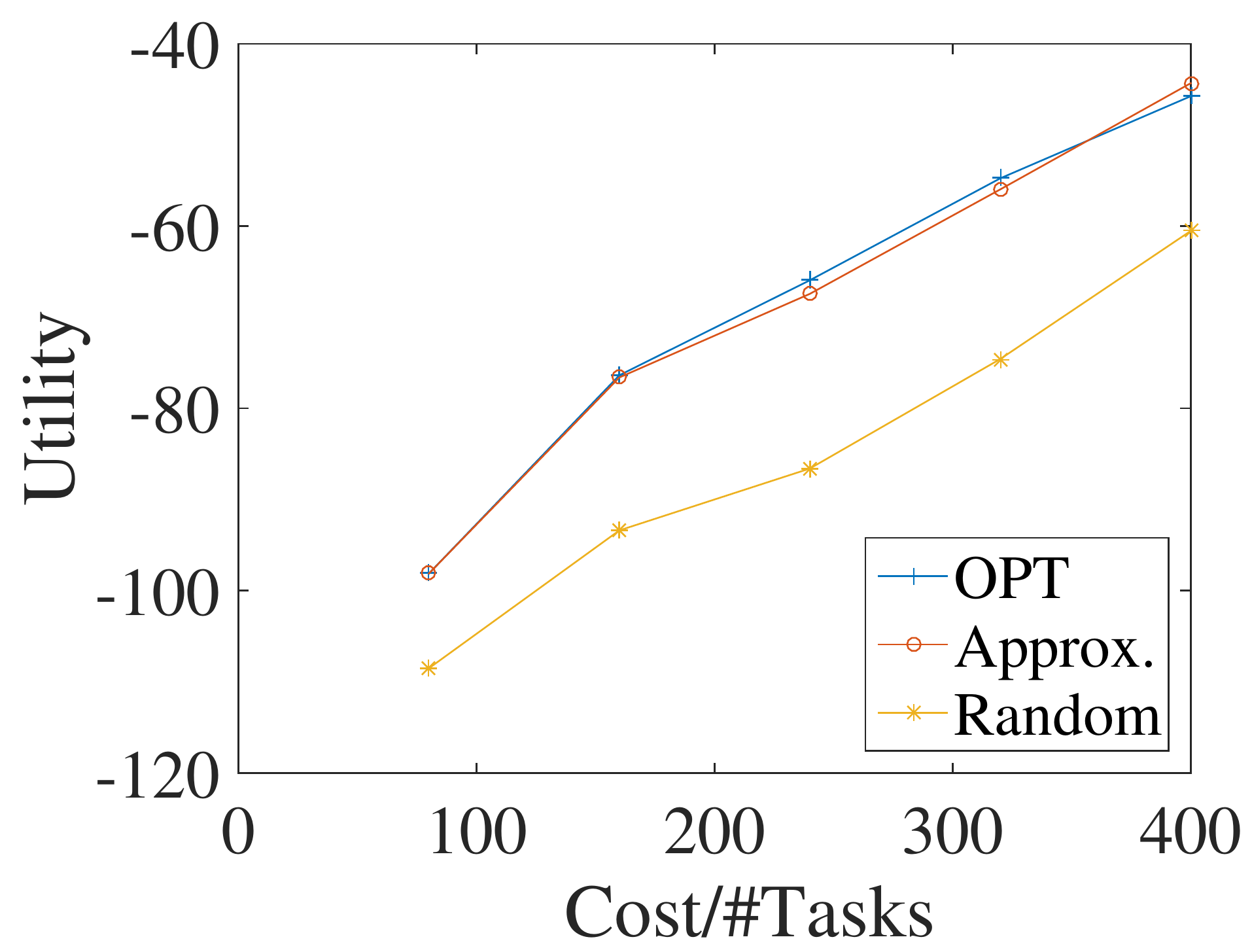}} 
	\subfigure[$P_c=0.9$, Utility]{ 
		\label{fig:bruteQuality:f} 
		\includegraphics[width=0.3\textwidth]{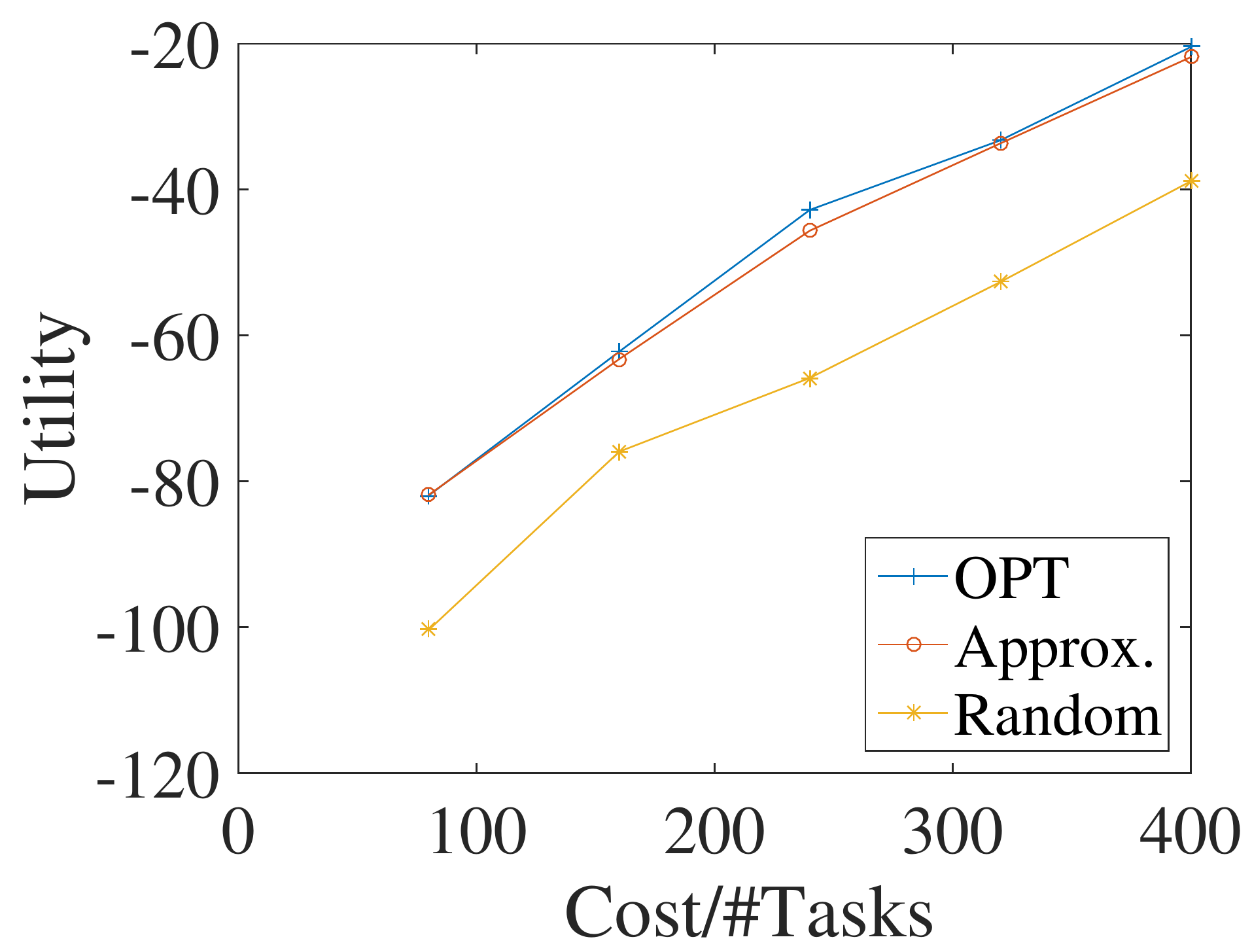}} 
	\centering
	\caption{Quality Improvement of Three Methods, Cost as Number of Total Tasks} 
	\label{fig:bruteQuality} 
\end{figure*}
\textbf{Dataset:} We adopt the \textit{Book} \footnote{\url{http://lunadong.com/fusionDataSets.htm}} dataset, which is widely used in the field of data fusion \cite{yin2008truth,dong2009integrating}. This dataset is particularly appropriate for crowd workers as it is about general real world knowledge and information that can be easily obtained manually.
We manually label the ground truth for all items related to the \textit{Gold standard} provided by the dataset for the purpose of evaluation.

As the data are statements of books' author list but not author of books, we define our facts triple as \{book, complete full name author list, statement\} rather than \{book, author, statement\}. Please note that since the author list could be in different formats and in different orders, for each book there could be more than one statement that is true. Here comes an example. For book \emph{Internet Effectively: A Beginner's Guide To The World Wide Web} with ISBN $0321304292$, the following two author lists statements are both considered as true statements: \emph{Adams, Tyrone; Scollard, Sharon} and \emph{Tyrone Adams, Sharon Scollard}.

Since our CrowdFusion system is general enough to be initialized by any ``machine-only" fusion model with probabilistic results,
we adopt the modified \textit{CRH framework}\cite{li2014resolving} for initialization in this experiment. Explicitly, statistics of a small set of books suggest that only around $50\%$ of Web data facts (i.e. raw data) is correct.
Because the \textit{CRH framework} only supports single true fact, we modify the \textit{CRH framework} with following method: we firstly mark top $50\%$ of author lists for each book as the correct author lists by majority voting and then we apply weight assignment, missing values normalization and truth computation in \textit{CRH framework} to the data set.


\begin{figure*} [!ht]
	\centering 
	\subfigure[$k=1\sim 3$, $F_1$-Scores]{ \label{fig:kIncreasing:a}
		\begin{minipage}{0.22\textwidth} 
			\centerline{$P_c=0.7$}
			
			\includegraphics[width=1\textwidth]{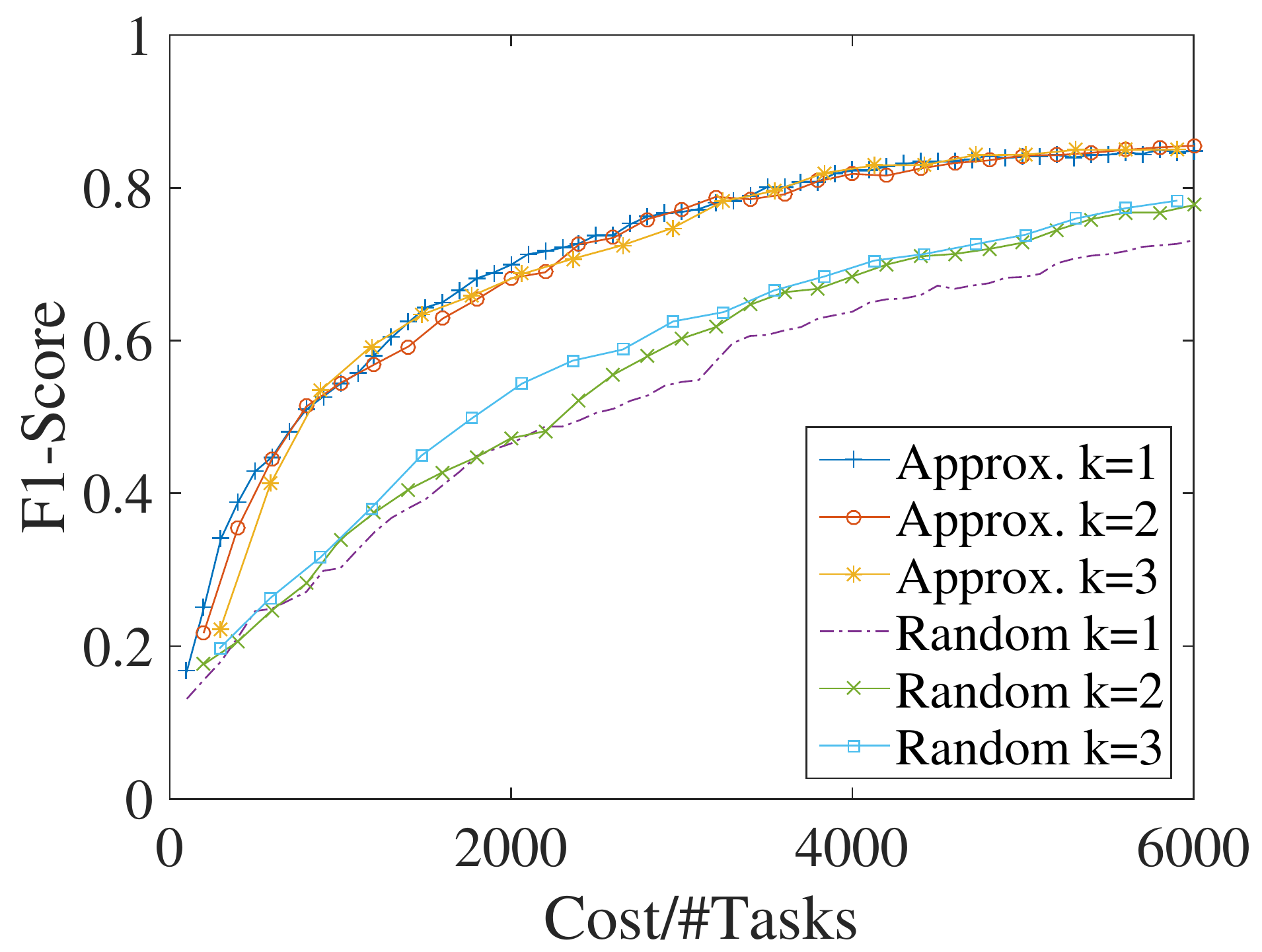} 
			\centerline{$P_c=0.8$}
			
			\includegraphics[width=1\textwidth]{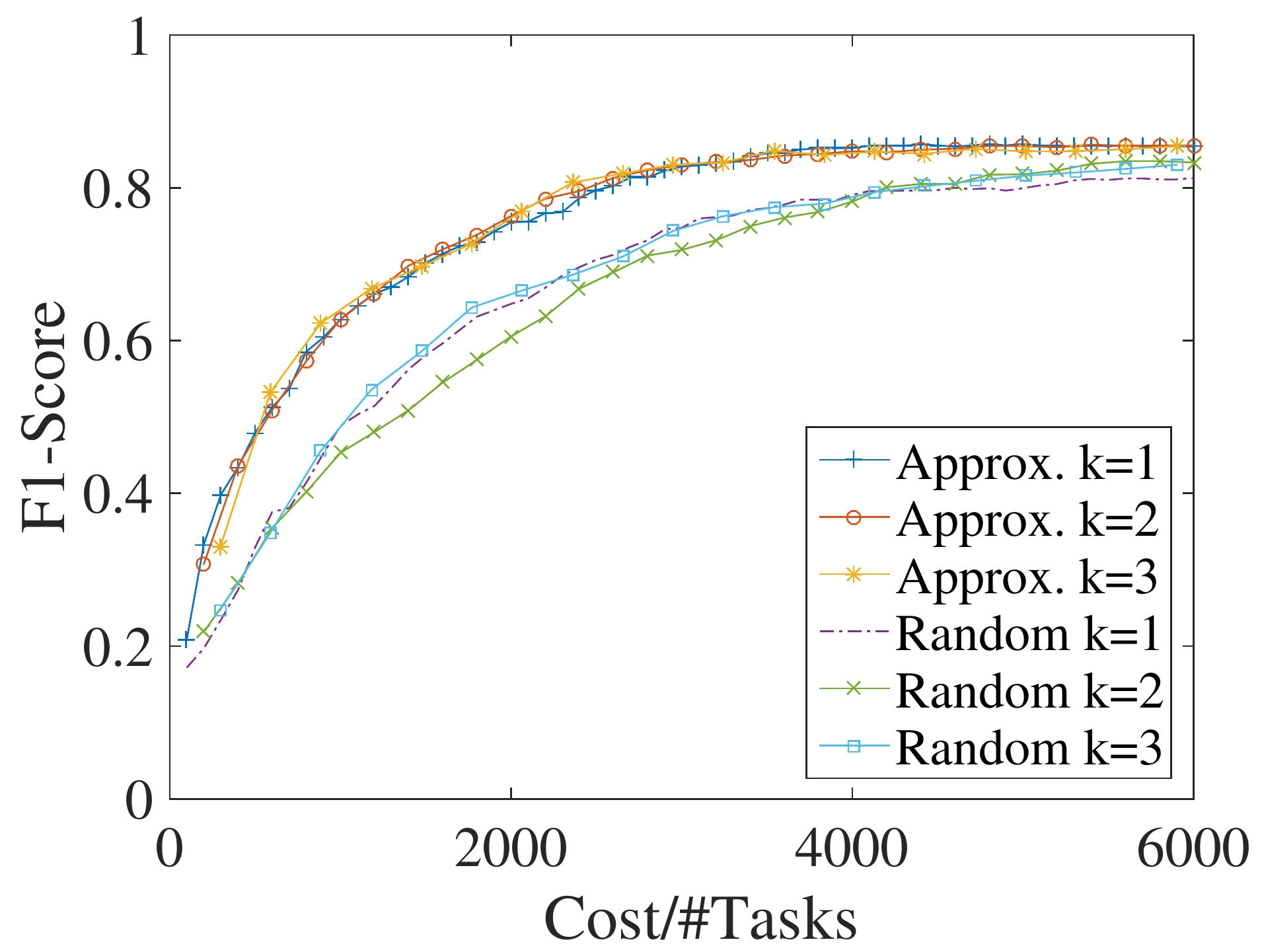} 
			\centerline{$P_c=0.9$}
			
			\includegraphics[width=1\textwidth]{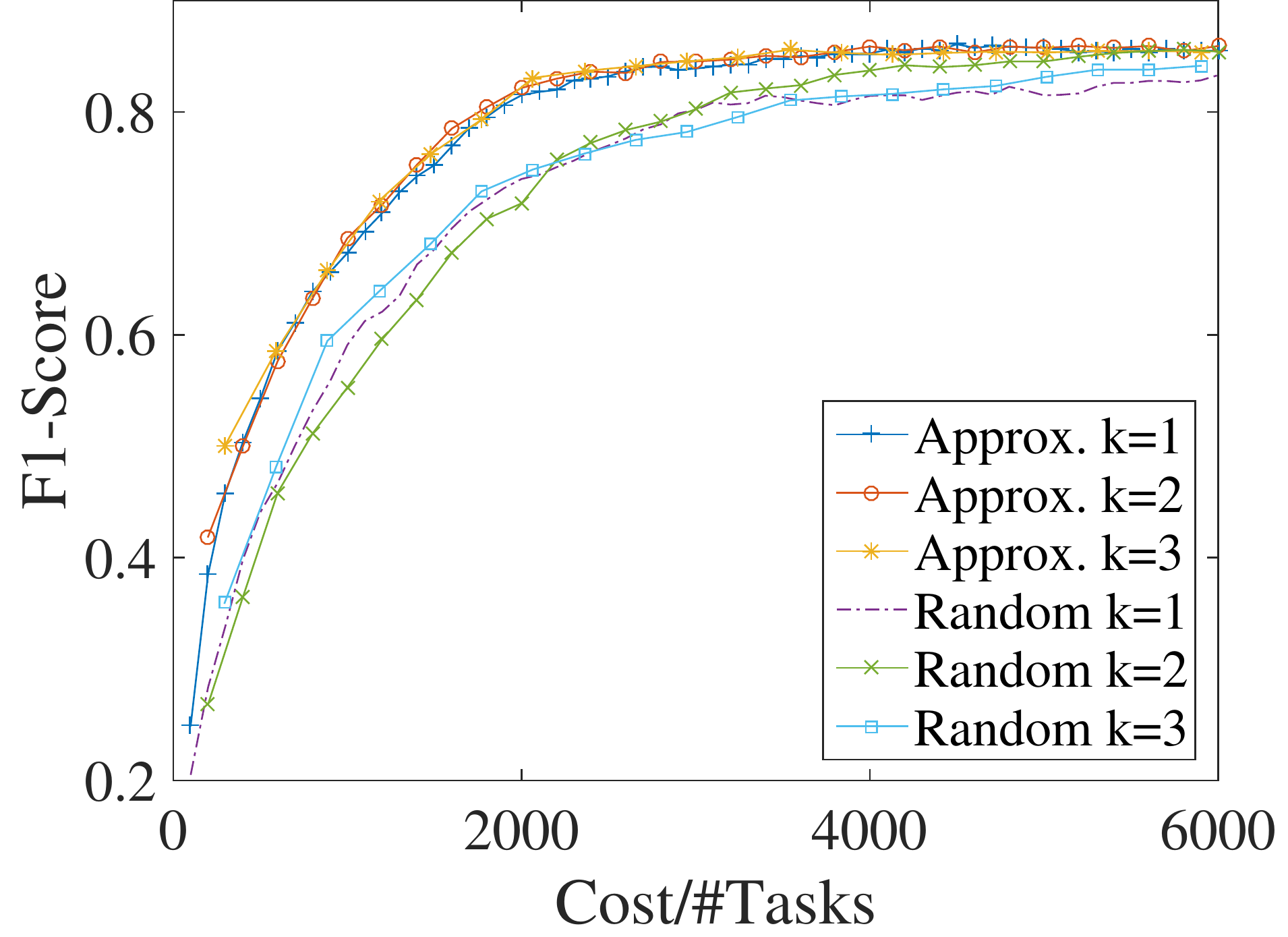} 
		\end{minipage} 
	} 
	\subfigure[$k=1\sim 3$, Utility]{ \label{fig:kIncreasing:b}
		\begin{minipage}{0.22\textwidth} 
			\centerline{$P_c=0.7$}
			
			\includegraphics[width=1\textwidth]{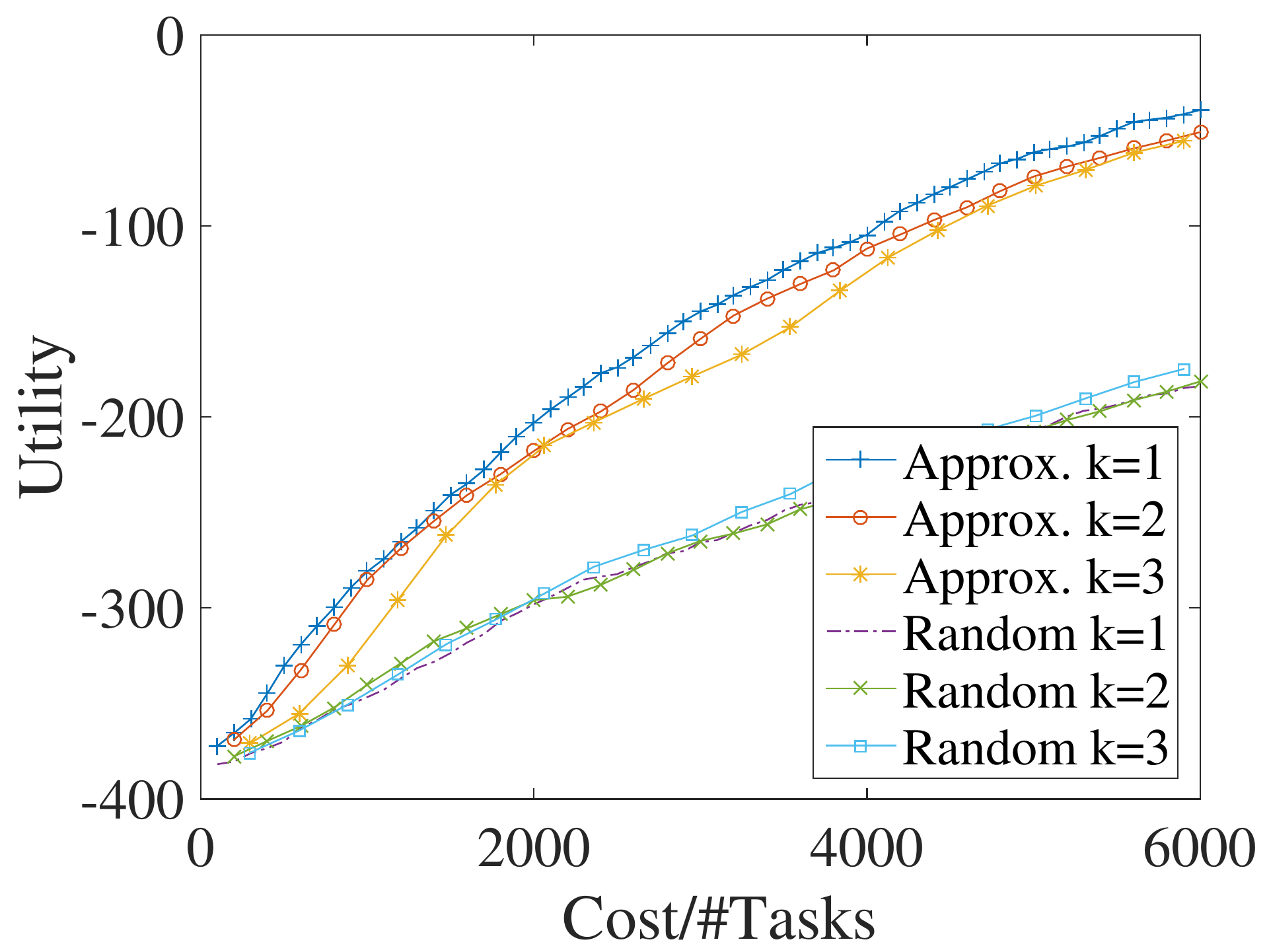} 
			\centerline{$P_c=0.8$}
			
			\includegraphics[width=1\textwidth]{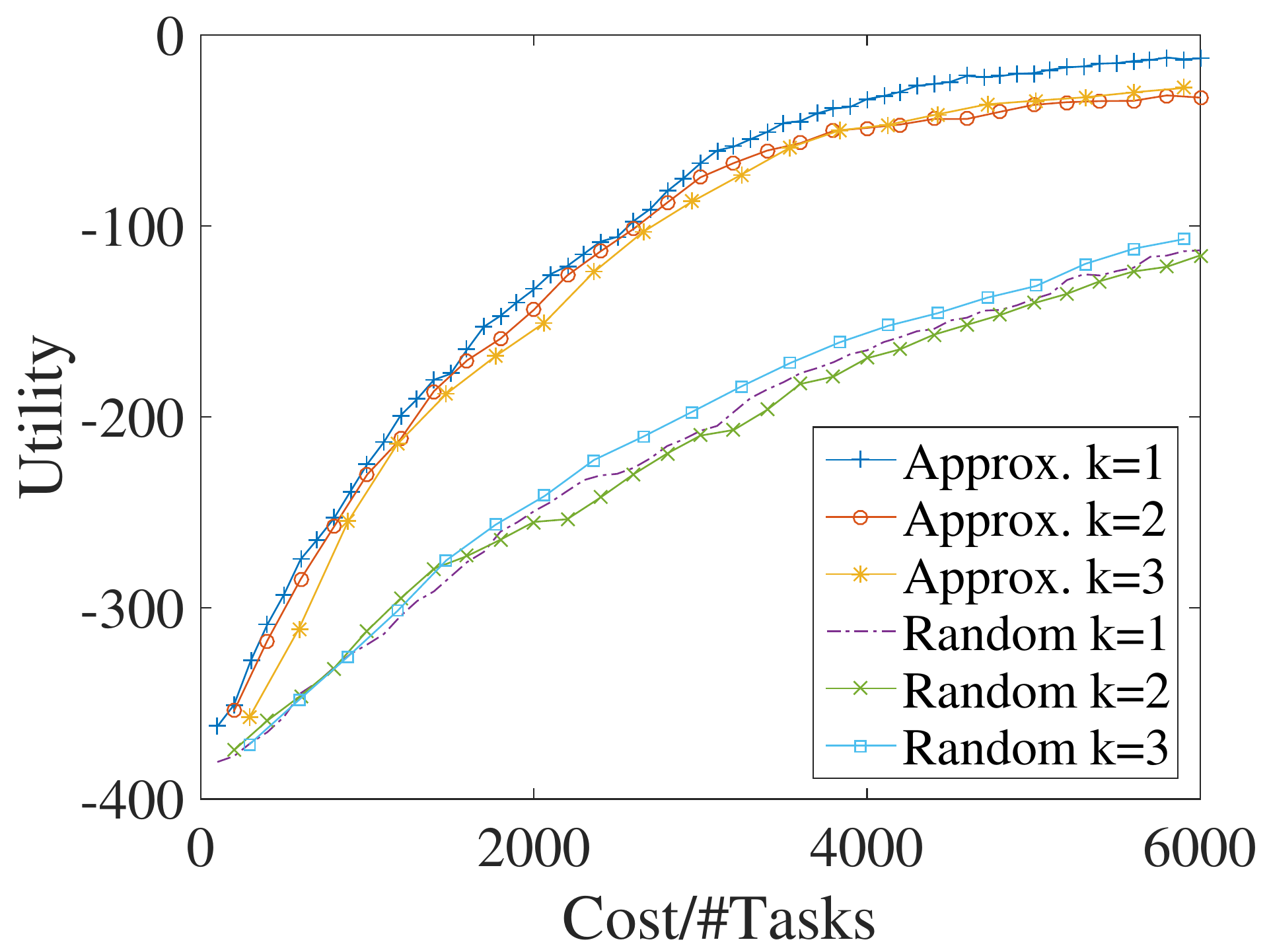} 
			\centerline{$P_c=0.9$}
			
			\includegraphics[width=1\textwidth]{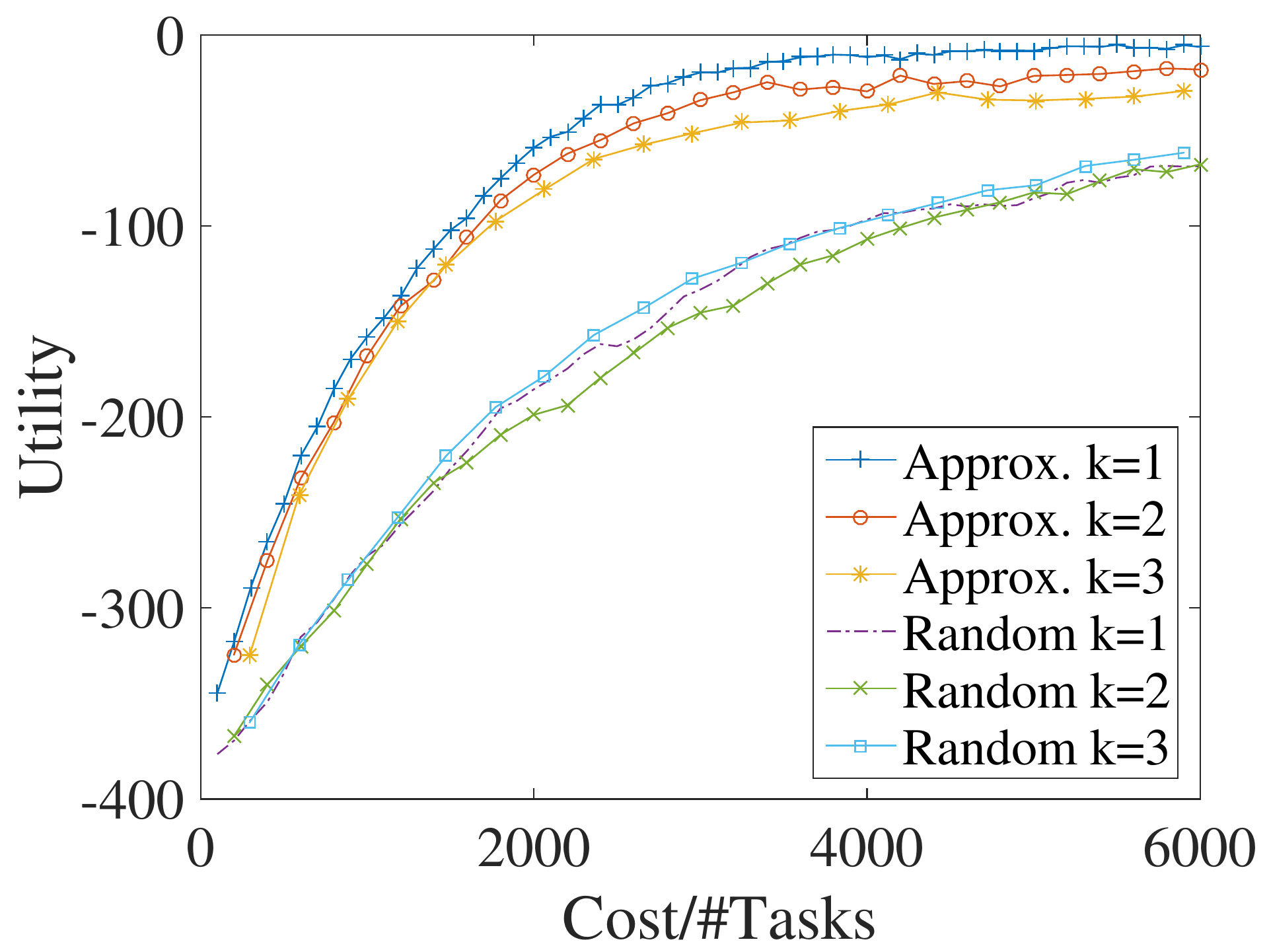} 
		\end{minipage} 
	} 
	\subfigure[$k=4\sim 6$, $F_1$-Scores]{ \label{fig:kIncreasing:c}
		\begin{minipage}{0.22\textwidth} 
			\centerline{$P_c=0.7$}
			
			\includegraphics[width=1\textwidth]{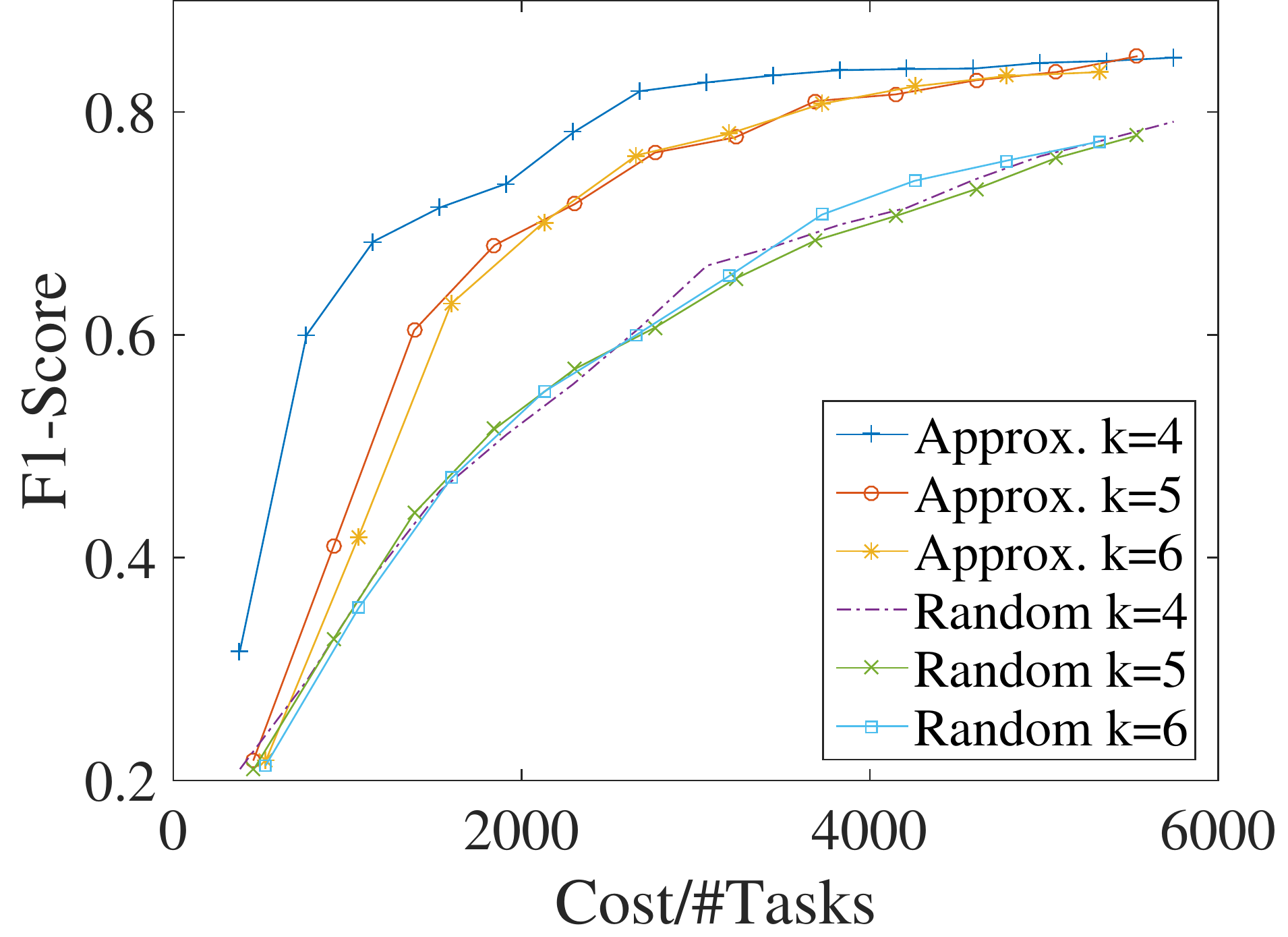} 
			\centerline{$P_c=0.8$}

			\includegraphics[width=1\textwidth]{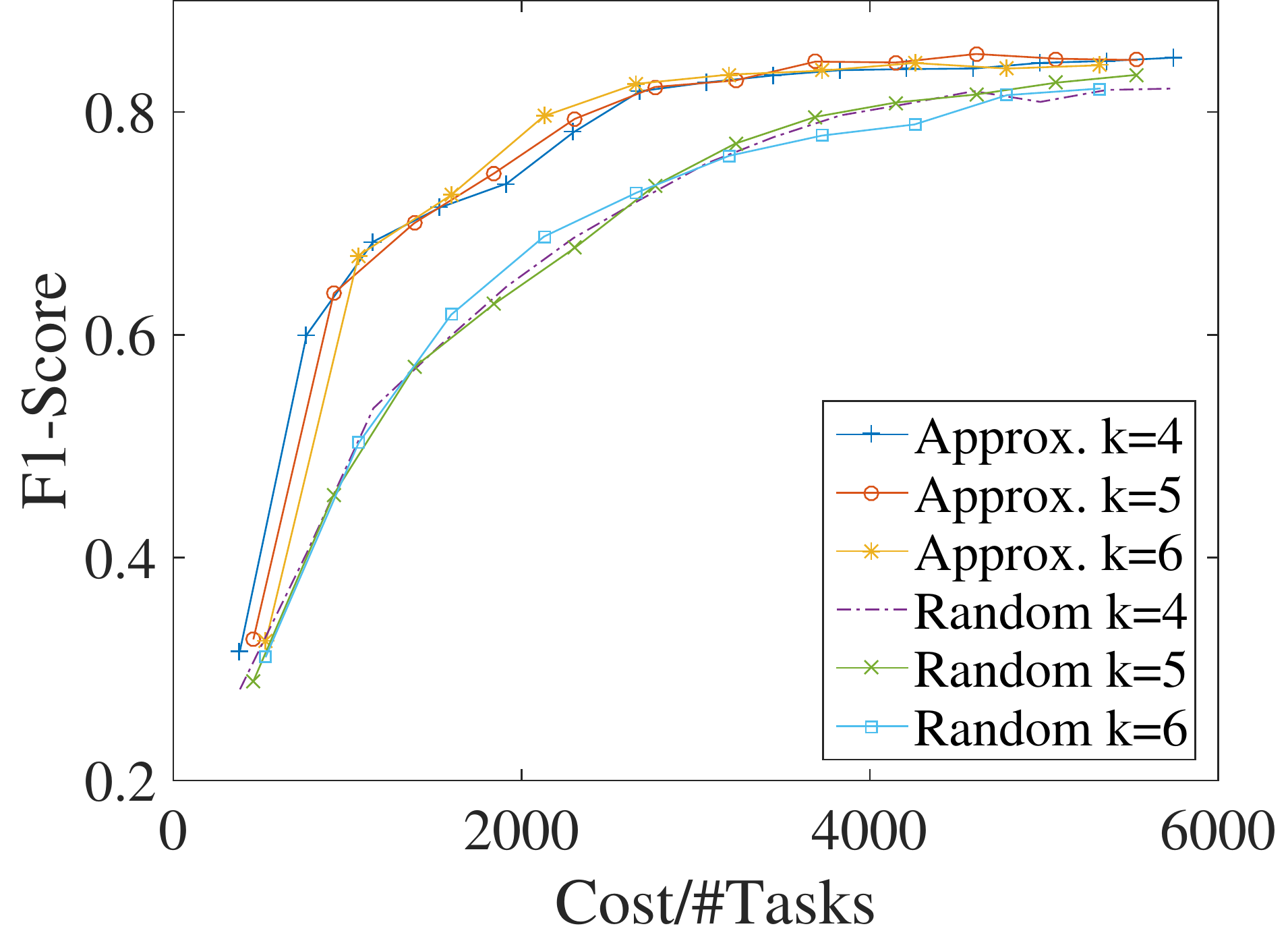} 
			\centerline{$P_c=0.9$}
			
			\includegraphics[width=1\textwidth]{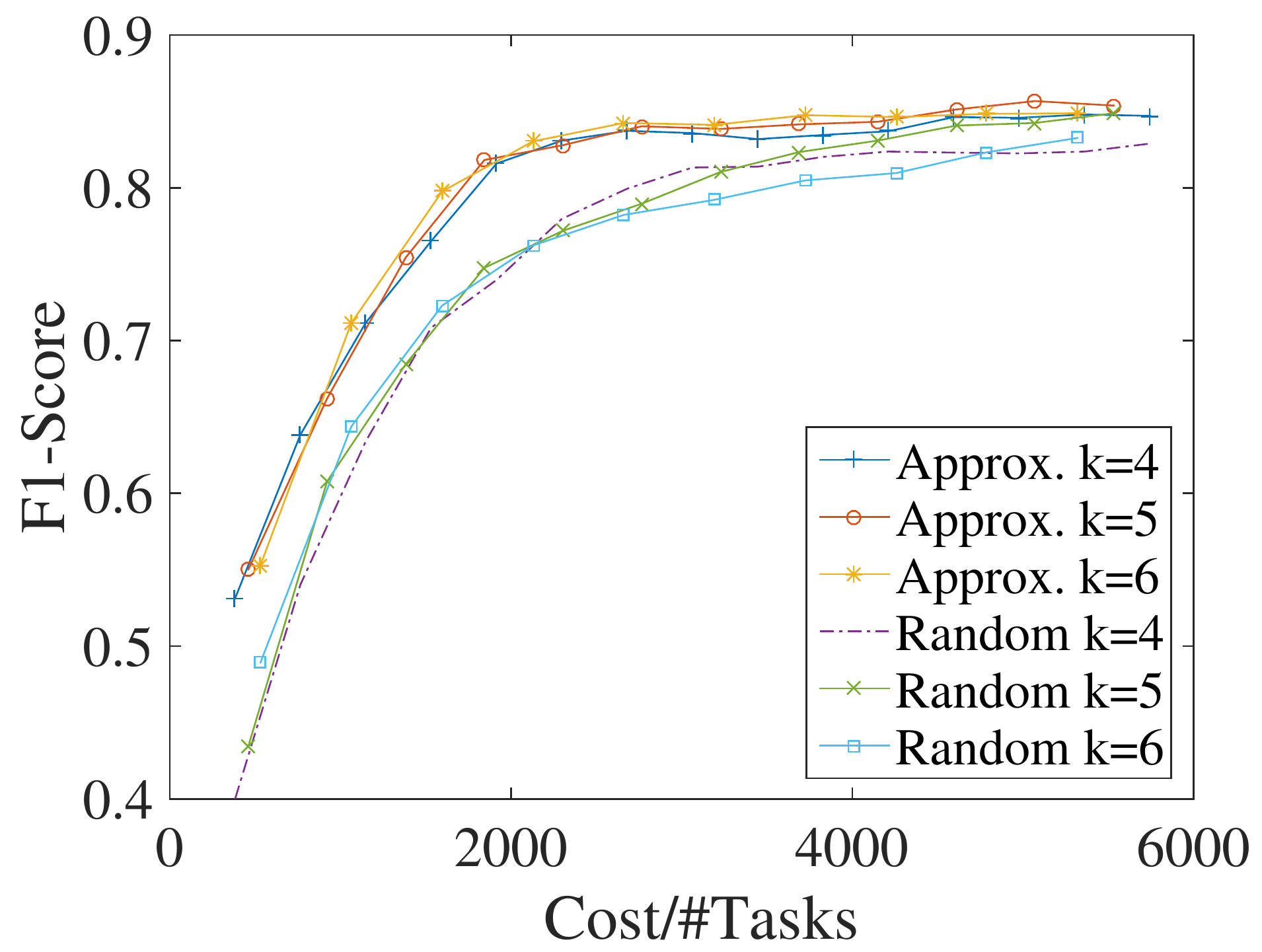} 
			
		\end{minipage} 
	} 
	\subfigure[$k=4\sim 6$, Utility]{ \label{fig:kIncreasing:d}
		\begin{minipage}{0.22\textwidth} 
			\centerline{$P_c=0.7$}
			
			\includegraphics[width=1\textwidth]{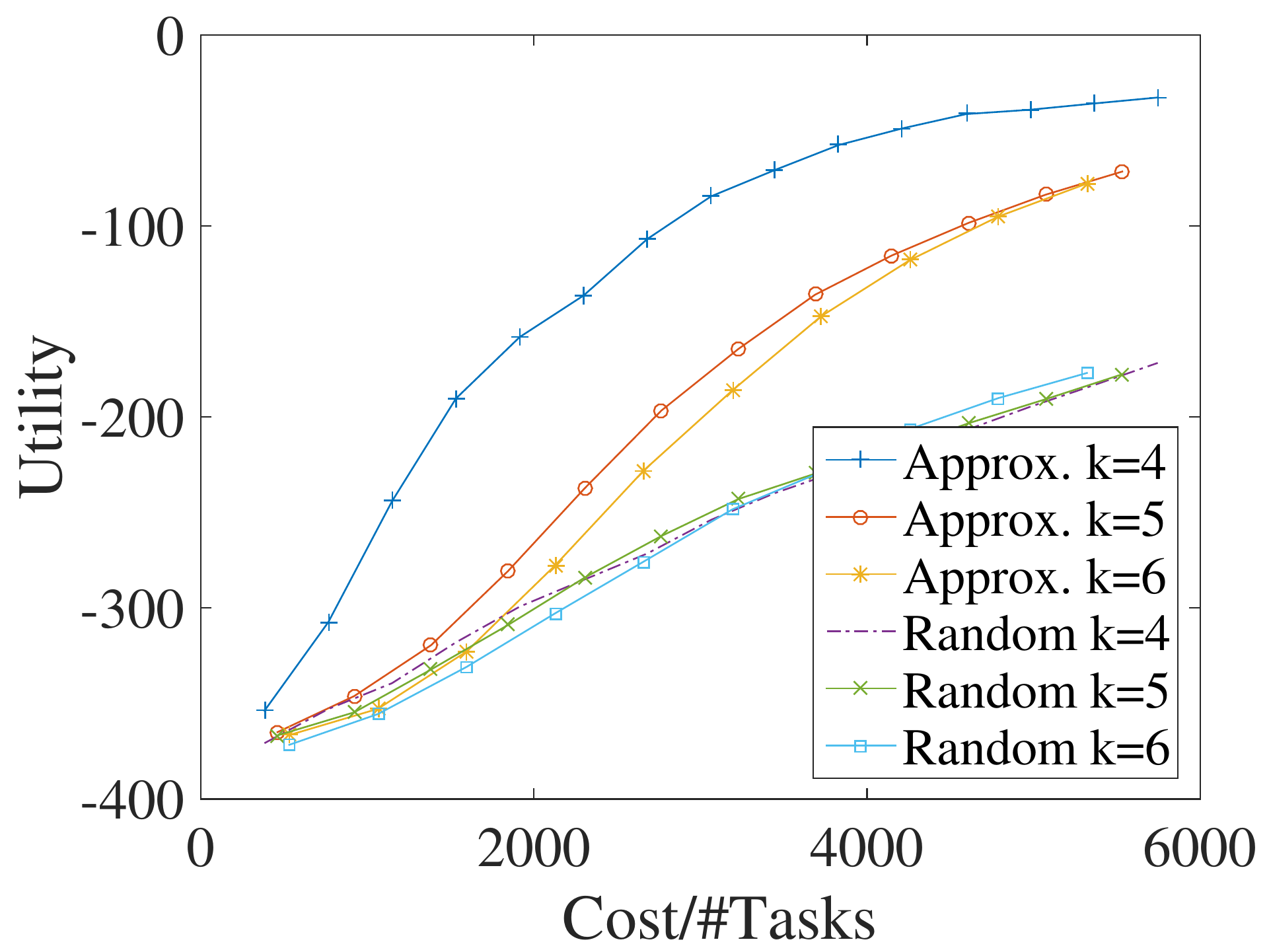} 
			\centerline{$P_c=0.8$}
			
			\includegraphics[width=1\textwidth]{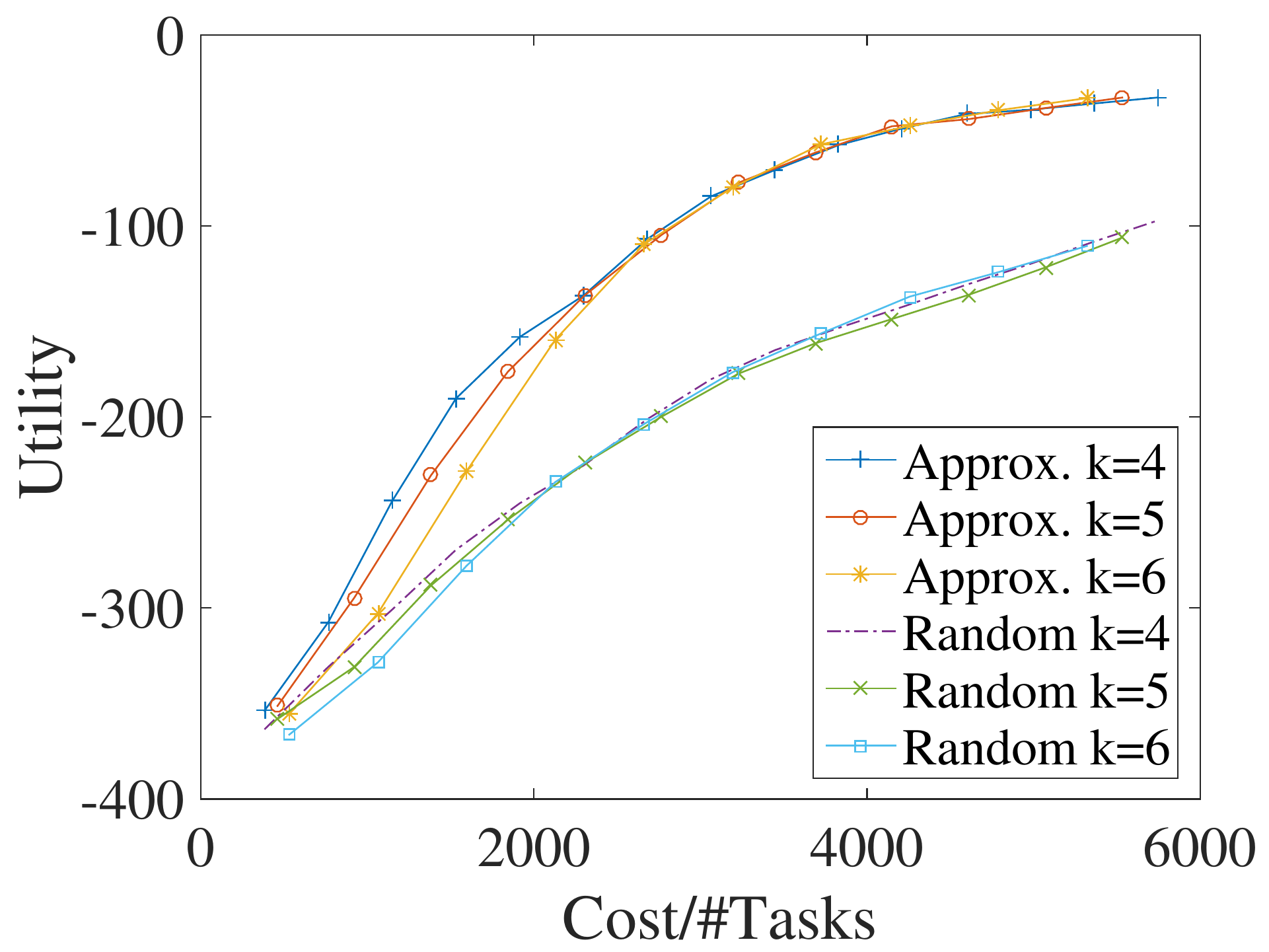} 
			\centerline{$P_c=0.9$}
			
			\includegraphics[width=1\textwidth]{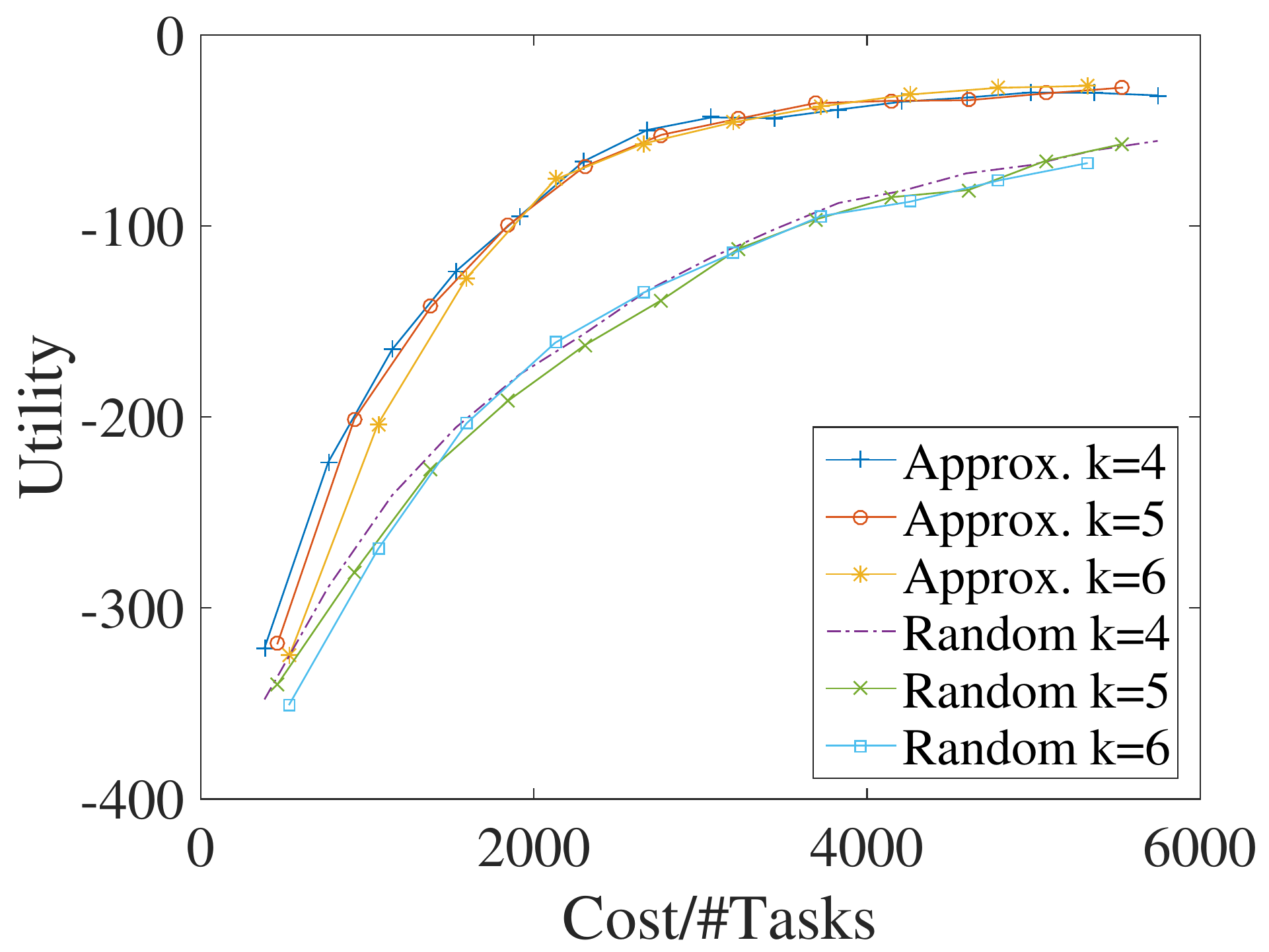} 
			
		\end{minipage} 
	} 
	\caption{Quality Improvement with Different $k$ settings, Cost as Number of Total Tasks} 
	\label{fig:kIncreasing} 
\end{figure*}
We treat information about each book independently and set a budget of $B=60$ tasks for each book.
In each round, we set the number of tasks to $k$, so there are $\lceil B/k \rceil$ rounds.
If a book has $n\geq k$ facts, we will ask $k$ tasks in every round except for the last one.
Otherwise, we will ask $n$ tasks in each round instead.

\textbf{Operating Environment}
The experiments are run on a $10$-nodes Linux cluster. Each note contains $4\times 10$-core Intel Xeon E$5$-$2650$v$3$ ($2.3$ GHz) processors and $196$ GB Physical memory. The Linux distribution installed is CentOS $6$, x$86\_64$ edition. For each condition, the programs are run for three times to get an average time cost on a single node of the cluster. 

\subsection{Efficiency Evaluation}\label{sect:time}
In this subsection, we show the time cost of the following competing algorithms: (1) \textbf{OPT}: selecting exact optimal algorithm by brute-force method; (2) \textbf{Approx.}: the approximation algorithm (3) \textbf{Approx.\&Prune} the approximation algorithm with pruning;  (4) \textbf{Approx.\&Pre.} the approximation algorithm with the preprocessing strategy and (5) \textbf{Approx.\&Prune\&Pre}: the approximation algorithm with both pruning and the preprocessing strategies. Please note that the books with a small number of facts usually stop getting better early and cannot show the efficiency clearly. Therefore, 
in order to distinguish the performances of above algorithms, we focus on books with facts more than $20$. We test average running time for the books in one round. The details of the experimental results are demonstrated in Table~\ref{table:time}.

The time cost of  \emph{OPT} method increases exponentially, which is not affordable in real application. 
With $k=4$, we had been waiting for more than $5$ days and the algorithm was still running. From the experimental results, one can see that the pruning strategy is powerful - the time cost is almost constant w.r.t. the increase of $k$, no matter we adopt the preprocessing or not.
Even though approximation method is a linear algorithm, the time cost increases rapidly and does not scale well. With the preprocessed data, we can significantly decrease the time consumption for task selection. Please note that we need to do the preprocessing for each round of selection. The approximation method with preprocessing is still a linear algorithm, but becomes much faster than that without preprocessing. 

\subsection{Quality Evaluation on gMission}
We verify the correctness of our CrowdFusion system, by
evaluating the accuracy of proposed algorithms with different settings of $k$ and $P_c$.

We use two different measurements to evaluate quality of our method. 
The first one is the utility defined in subsection \ref{sect:datamodel}, which is our optimization goal. This measurement indicates how well our algorithm approximates the optimization goal. We simply sum up the utility scores of all data instances for the evaluation. The second measurement is $F_1$ score, which is calculated based on the ground-truth labels. 

Budget is an essential issue in our experiment, thus we focus on the quality improvement with the increase of budget. Our budget allows CrowdFusion to ask at most $60$ tasks for each book and we have $100$ books in total.
\subsubsection{Comparison of Algorithms}
Due to that the \emph{OPT} solution is not scalable, we need to scale down $k$, $B$ and $n$ to conduct comparison with the \emph{OPT} solution. Please note that \emph{OPT} with $k=1$ would have the result exactly the same as that given by the approximation algorithms - they all select the very best task at each round. We compare performance of \emph{OPT} solution and our greedy solution only in condition that $k=2$ and budget $B=10$ with a small subset of data with $40$ books, which contains the least number of statements $n$ in the whole data set. 

Figure \ref{fig:bruteQuality} shows comparing between \emph{OPT} results, approximation algorithm results and a random selection results in both $F_1$-Score measurement and utility measurement. Our approximation strategy performs as good as the \emph{OPT} method not only in reaching optimization target but also in getting accurate results. And our approximation algorithm is significantly better than random selecting method. Additionally, \emph{OPT} results are not always better than the approximated results, which is because that the crowdsourced answer can be incorrect and hereby reduce the utility and F1-score. In other words, the quality is not absolute monotonic w.r.t the number of crowd sourced answers recieved. 
\subsubsection{$k$ Settings Comparison}
Recall that we will ask $\lceil B/k\rceil$ rounds for Data Fusion. With a limited budget $B$, smaller $k$ will cause more $rounds$.
Since we publish our tasks on gMission platform round by round, we need to wait for response from crowd workers one round by one round.
In each round, as every task will be distributed simultaneously, larger tasks set will not significantly increase waiting time for finishing answer collection.
Naturally, if we take larger $k$, we can finish our job faster. 

It is clear to see that the proposed algorithms outperform the random algorithm in any circumstance. In addition, from Figures \ref{fig:kIncreasing:b} and \ref{fig:kIncreasing:d}, one can observe that the curves
with smaller $k$ tend to have better performance in terms of utility. In Figure \ref{fig:kIncreasing:c} and \ref{fig:kIncreasing:d}, smaller $k$ also leads to higher $F_1$-Score. In particular, with $P_c = 0.7$, we get much higher $F_1$-Score if we set $k$ to a smaller value. This is because, when $k>1$, some tasks are not at the top
of the list when they are chosen, since at each iteration, we choose
$k$ good tasks rather than the very best one. Since there will be high uncertainty of workers answers if $P_c$ is low, we can ask questions more targetedly rather than trust user answers blindly. 
The experimental results suggest that the parameter $k$ can be used to balance the time efficiency and the quality. We conclude that
$k$ should be set to a small value when the budget is the main constraint; whereas a large value is suggested for k if time-efficiency
is the primary constraint. 

As another interesting observation, the random method shown in Figure \ref{fig:kIncreasing} indicates a totally reverse conclusion, i.e. the larger $k$ is, the better performance it gets.
In each round of task selection, we can only select a task once. Thus, large $k$ ensures that we can select a widely range of tasks in the random selection method. Rather than randomly selecting tasks multiple times, it would be better to select as wider range of tasks as possible. While rather than blindly randomly selecting, selecting tasks based on ultility estimation with our approximation method is the best.
Please note that if $k$ is bigger than $n$ for a subset, there would be no difference between the random selection and the approximated selection - they both select all the existing tasks.


\subsubsection{$P_c$ Settings Comparison}
\begin{figure}[t]\label{fig:PSelect}
	\centering 
	\subfigure[$P_c$ setting affects $F_1$-Score]{ \label{fig:PSelect:a}
		\includegraphics[width=0.40\textwidth]{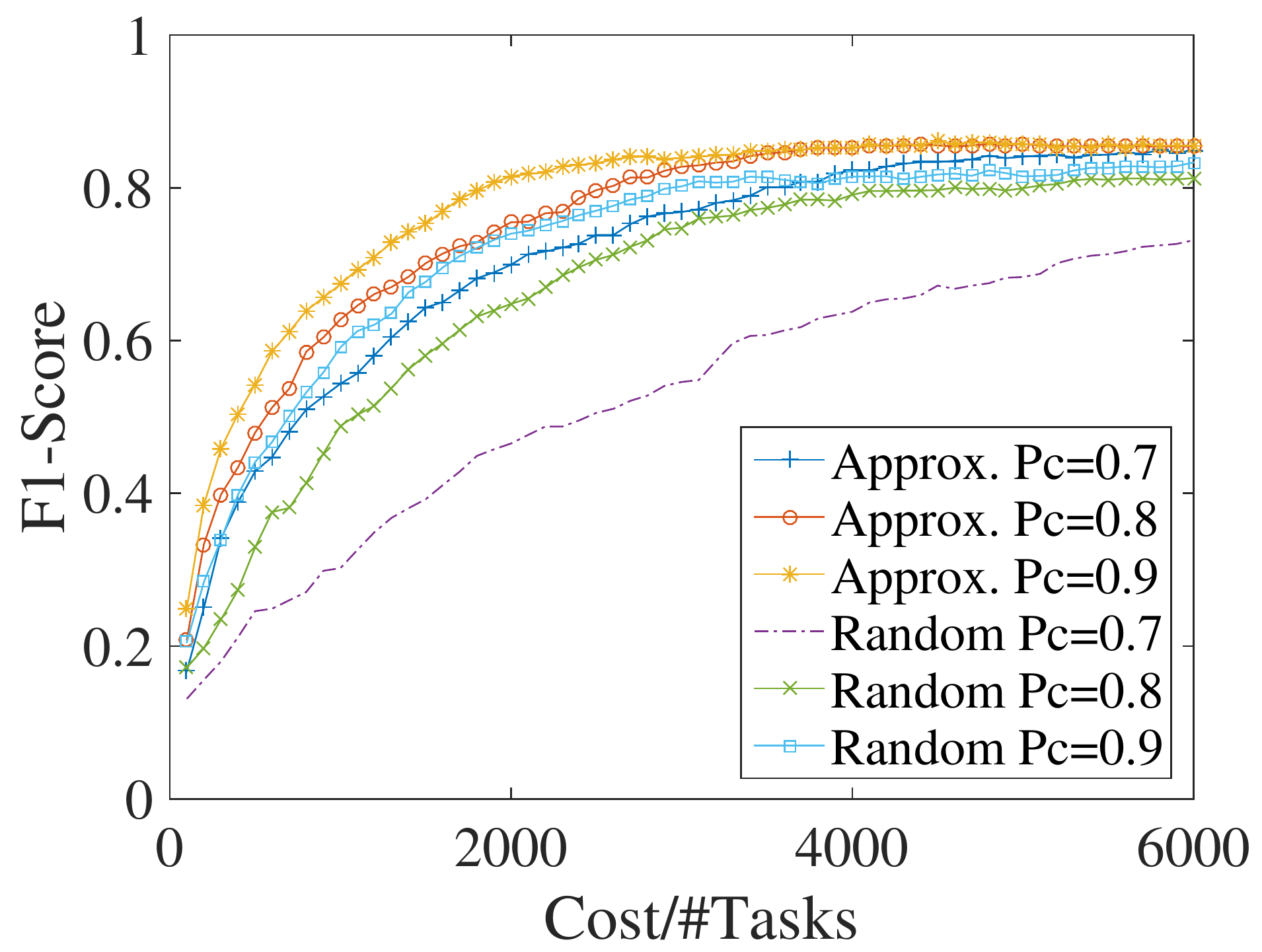} 
	} 
	\subfigure[$P_c$ setting affects Utility]{ \label{fig:PSelect:b}
		\includegraphics[width=0.40\textwidth]{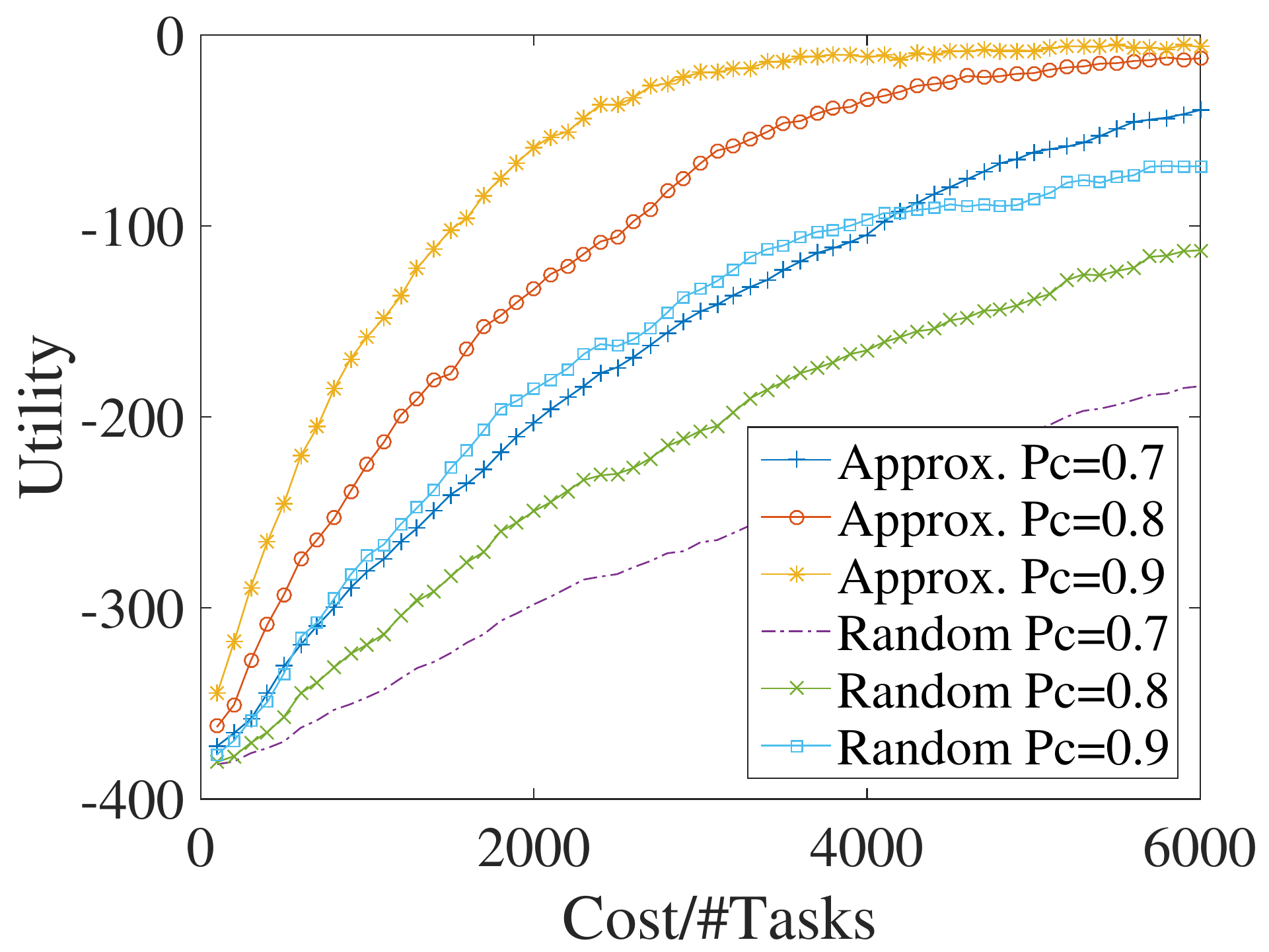} 
	} 
	\caption{Different $P_c$ settings, Cost as Number of Total Tasks}
\end{figure}
Consider an extreme condition first: if we set $P_c$ to $1$, we soon get the utility reduced to $0$, which is the maximum value for utility function.
That is to say, if crowd workers have high reliability, we will be able to confirm our answers by generating less tasks.
Similarly we can see from Figure \ref{fig:PSelect:b}, we can get higher utility with higher $P_c$.

However, with $P_c=1$, one error made by a worker will keep being wrong permanently. By statistical study, we know that the correct rate of workers for our task is around $0.86$. By setting $P_c$ to $0.8$ and $0.9$, both can achieve competitive $F_1$-score as we can see in Figure \ref{fig:PSelect:a}. Underestimating the reliability of the crowd would slow down the overall crowdsourcing procedure and proper estimating the reliability can lead to a good performance. Consequently, if possible, in real applications, we should estimate the reliability by a pre-test with groundtruth.

Utility value and $F_1$-Score are different measurements. Utility is for measuring how confident we are about our judgment and $F_1$-Score is for measuring accuracy compared with ground truth considering both precision and recall. By all comparisons till now, we see high consistency between our utility function value and $F_1$-score. This fact indicates that our utility function can properly fit into real-world data fusion applications very well.

\subsection{Error analysis}
No matter how we choose parameters, there still remains a gap between the obtained $F_1$-Score and $1$. In this section we will discuss about what causes the gap and how to fit the gap.

By manually checking, we conclude that there are two kinds of wrong judgments.

First, books with large numbers of statements are more likely to be judged incorrectly.
As we have budget $B=60$ for each book, the books with more statements have less opportunities to be asked for each statement.
The difference between smaller $k$ and larger $k$ is mostly caused by this part of errors. 
With smaller $k$, we can target on facts which can increase data quality most significantly.
However, with a large $k$, we can neither ask broad range of statements for those books nor ask crowds with targets.
This can be solved by using smaller $k$ or increasing budget.
Also, if a proper strategy can be designed to distribute budgets among all subsets of facts, this can be solved.

Second, there often exist statements that can hardly be judged.
We notice that crowd answers to certain statements only have correct rate a little bit higher than $0.5$ and our system will give judgment based on correlations between the facts and other facts consequently.
The confusion of judgment appears on following types of statements:
\begin{itemize}
	\item[*] \textbf{Wrong Order} In our ground truth, different author list order will not affect the judgment of whether the author list is correct or not. The book with ISBN $1558609350$ has two authors \emph{Catherine Courage and Kathy Baxter} as shown on cover page. There is a correct statement that the author list is \emph{BAXTER, KATHY COURAGE, CATHERINE}. However, such an author list statement will lead to high diversity of crowd answers. This is the most significant error judgment and has caused a lot of false negatives;
	\item[*] \textbf{Additional Information} In some statement, organization of authors or publisher is included. Such as a statement about author of book with ISBN $0201767910$ is \emph{RUCKER, RUDY (SAN JOSE STATE UNIVERSITY, USA)}. Our ground truth considers it as wrong statement since organization information is not a part of ``author list''. However we notice that more than $40\%$ of workers consider such a statement as true.
	\item[*] \textbf{Misspelling} A misspelling in an author list is hard to be noticed and for some statement the correct rate is even lower than $50\%$ only because of the misspelling. There is a book with ISBN $1558608109$ authored by \emph{Pete Loshin}. However, more than a half crowd workers mark statement \emph{Loshin, Peter} correct.
\end{itemize}

Such kinds of error are caused by lack of judgment standard and ambiguous guidance. And this can be solved by giving workers a small set of sample tasks and offering them correct answers with explanation of that answers. As long as we ensure that correct rate is higher than $50\%$, the errors can also be solved by increasing budget.

\section{Related work}
\label{related}
In this section, we review the related work. Since our system is built on existing data fusion methods and crowdsourcing technique is applied, we review the related work in two categories: (1) data management, query processing and learning with crowds and (2) data fusion and truth discovery.

\subsection{Data Management, Query Processing and Learning with Crowds}
Many tasks cannot be answered by machines only. The recent booming up of crowdsourcing brings us a new opportunity to engage human intelligence into the process of answering such queries (see \cite{DBLP:journals/cacm/DoanRH11} as a survey).  In general, \cite{DBLP:conf/sigmod/FranklinKKRX11} proposed a query processing system using microtask-based crowdsourcing to answer queries. Many classical queries are studied in the context of crowdsourced database, including max \cite{guo2012so}, filtering \cite{parameswaran2012crowdscreen}, sorting \cite{journals/corr/abs-1109-6881} etc. In \cite{DBLP:conf/cidr/ParameswaranP11}, a declarative query model is proposed to cooperate with standard relational database operators. As a typical application which relates to data integration, \cite{wang2012crowder} utilized a hybrid human-machine approach on the problem of entity resolution. We follow the general way which utilizes a hybrid system for data fusion problem.

On the other hand, related researches have been studying active learning via crowds. Active learning is a form of supervised machine learning, in which a learning algorithm is able to interact with the experts (or some other information source) to obtain the desired outputs at new data points. Thus, the goal of active learning is to improve the accuracy of classifiers as much as possible through selecting limited data to label. A widely used technical report on active learning is \cite{DBLP:series/synthesis/2012Settles}. In particular, \cite{DBLP:journals/corr/abs-1209-3686,DBLP:conf/aaai/ZhaoSS11} proposed active learning methods specially designed for crowd-sourced databases. Our method follows the general principle of active learning. When there is a need, we sample data from crowd with an estimation of the information gain.

\subsection{Data Fusion and Truth Discovery}

Data fusion is one of the approaches to integrate multiple data
sources, which is widely applied in tasks of discovering truth.
Differences between different data fusion strategies are about how
to solve conflicts \cite{bleiholder2008data}.

The strategies include conflict ignoring, conflict avoiding and
conflict resolving\cite{bleiholder2006conflict}. Recent researches
move attention to Web data fusion. Due to dynamic millions of web
data with widely differing qualities, new strategies are developed
to deal with  the problem of lacking of cleaning evidence
\cite{rezig2015query}, estimating web sources reliability
\cite{dong2015knowledge} and information propagating detection
\cite{bronselaer2015propagation}.

As surveyed in \cite{li2015survey}, data fusion can be categorised by scoring or labeling. The scoring usually in the form of probability which is suitable for our method for an improvement. A labeling result assign a true or false label to each of facts. Such a result can turn into scoring and be further improved by our system by initializing with probability measurement of how trustworthy the machine-only method is.

\section{Conclusions}\label{sect:conc}
In this paper, a crowdsourced data fusion refinement method is proposed and utilized to improve data fusion results of existing machine-only methods.
In lots of circumstances, judging on-line information is true or false is a complicated task for machines as information online is neither structured nor able to comprehended by machines.
Consequently, we make use of human intelligence to help us with high quality data fusion.
With the consideration that crowd is noisy, our result can benefit from the crowd answers to the selected tasks.
Since different tasks may lead to different amount of benefit, we design an approximate algorithm with pruning and preprocessing strategies for this task selection problem which is NP-hard if we want exact the best task set.
Empirical study shows that CrowdFusion achieves high accuracy at
finding true facts and at the same time computational cost can be reduced by
the approximation algorithm or heuristic solution without losing
much effectiveness.

Many existing data fusion methods can be applied to CrowdFusion by
considering their result confidence distribution as an input of
CrowdFusion system. To be specifically, Bayesian based data fusion
methods and probability based data fusion methods can be simply
applied to CrowdFusion because their result is a (marginal)
probability distribution and can be extended to the joint
distribution as required. With the help of crowds, we can obtain
high quality data fusion result even if lacking of domain specific
knowledge. Such an easy-to-apply method can greatly benefit existing
data fusion methods without modification of any existing method.

Our work is just an initial solution for data fusion with
crowdsourcing technique As there is only one domain specific
heuristic method to model inference relationship between facts
\cite{yin2011semi} and correlation between facts can improve
performance of CrowdFusion, further research on measuring the
relationships between facts is a direction, which may be related
with natural language processing, image processing and audio
processing.


%

\bibliographystyle{abbrv}
\bibliography{crowdfusion}

\end{document}